\newcommand{\BIGOP}[1]{\mathop{\mathchoice%
{\raise-0.22em\hbox{\huge $#1$}}%
{\raise-0.05em\hbox{\Large $#1$}}{\hbox{\large $#1$}}{#1}}}
\newcommand{\bigtimes}{\BIGOP{\times}}
\newcommand{\BIGboxplus}{\mathop{\mathchoice%
{\raise-0.35em\hbox{\huge $\boxplus$}}%
{\raise-0.15em\hbox{\Large $\boxplus$}}{\hbox{\large $\boxplus$}}{\boxplus}}}
\newcommand{\field}[1]{\mathbb{#1}}
\newcommand{\N}{\field{N}}
\newcommand{\R}{\field{R}}
\newcommand{\C}{\field{C}}
\newcommand{\Z}{\field{Z}}
\newcommand{\HH}{\mathcal H}
\newcommand{\EE}{\mathcal E}
\newcommand{\FF}{\mathcal F}
\newcommand{\QQ}{\mathcal Q}
\newcommand{\eps}{\varepsilon}
\newcommand{\ph}{\varphi}
\newcommand{\sprod}[2]{\mbox{$\left\langle #1,#2 \right\rangle$}}        
\newcommand{\form}[3]{\mbox{$\left\langle #1\left|#2 \right|#3 \right\rangle $}}   
\newcommand{\supp}{\operatorname{supp}}
\newcommand{\Rea}{\operatorname{Re}}
\newcommand{\curl}{\operatorname{curl}}
\newcommand{\dist}{\operatorname{dist}}
\newtheorem{theorem}{Theorem}[section]
\newtheorem{lemma}[theorem]{Lemma}
\newtheorem{corollary}[theorem]{Corollary}
\newtheorem{prop}[theorem]{Proposition}
\theoremstyle{plain}
\title{On the Strong Coupling Limit of Many-Polaron Systems in Electromagnetic Fields}
\author{
D.~Wellig\\
Universit\"at Stuttgart, Fachbereich Mathematik\\
70550 Stuttgart, Germany}
\date{\today}
\begin{document}

\maketitle

\begin{abstract}
In this paper estimates on the ground state energy of Fr\"ohlich $N$-polarons in electromagnetic fields in the strong coupling limit, $\alpha\to\infty$, are derived. It is shown that the ground state energy is given by $\alpha^2$ multiplied by the minimal energy of the corresponding Pekar-Tomasevich functional for $N$ particles, up to an error term of order $\alpha^{42/23}N^3$. The potentials $A,V$ are suitably rescaled in $\alpha$. As a corollary, binding of $N$-polarons for strong magnetic fields for large coupling constants is established.
\end{abstract}

\section{Introduction and Main Results}


An ionic crystal is deformed by the presence of an excess electron via the Coulomb attraction resp. repulsion. The distortion induces a potential which acts on the electron. The resulting composite particle is called a polaron. More generally a $N$-polaron is a system of $N$ electrons with the corresponding distortions of the ionic lattice. In the physically admissible region the coupling constant $\alpha$ between electron and lattice, in our units, is bounded from above by the electron-electron repulsion strenght $U$. Energetically it is more favorable if the electrons deform the lattice in a small region, hence they tend to stay close together. Therefore an attractive force operates between the electrons which is counteracted by their Coulomb repulsion. Which force is stronger, depending on $\alpha$ and $U$, is discussed further below. For more information about the physical properties of polarons we refer to \cite{Dev1996, AD2010} and references therein.

The goal of this work is to prove that in the leading order of the coupling constant the ground state energy of $N$-polarons subject to a certain class of electromagnetic fields is given by the minimal energy of the Pekar-Tomasevich functional. For large values of $\alpha$ the effect of the external fields is negligible. Hence they are rescaled such that they grow with increasing $\alpha$. Combining this with the binding of Pekar-Tomasevich $N$-polarons subject to a constant magnetic field, which was recently established in \cite{AG2013}, we prove binding for Fr\"ohlich $N$-polarons in strong constant magnetic fields for large couplings. In the $N$-particle case without external fields, similar asymptotic exactness and binding results have recently been derived in \cite{AL2012}. The common strategy of the latter work and ours is to split up the $N$-polaron into disjoint groups of polarons, to estimate the interaction energy between the groups, and to derive the asymptotic coincidence with Pekar-Tomasevich for the individual groups by the techniques developed by Lieb and Thomas \cite{LT1997}.

We consider the model, introduced by H. Fr\"ohlich \cite{F1954}, that describes large polarons, i.e. polarons with large spatial extension compared to the lattice spacing. Additionally external potentials $V:\R^3\to\R$ and $A:\R^3\to\R^3$ are introduced, which generate the electric field $-\nabla V$ and the magnetic field $\curl A$. 
The Fr\"ohlich hamilton operator for $N$-polarons on the Hilbert space $\HH = L^2(\R^{3N})\otimes\FF$, with $\FF$ as the bosonic Fock space over $L^2(\R^3)$, is given by
\begin{align}
H^{(N)} &= \sum_{j=1}^N \left(D_{A,x_j}^2+V(x_j)+\sqrt\alpha\phi (x_j)\right)+H_{ph}+ UV_C(x_1,\ldots ,x_N), \label{frhamiltonian}
\end{align}
where $D_{A,x_j} = -i\nabla_{x_j}+A(x_j)$. $V_C(x_1,\ldots ,x_N)= \sum_{i<j} \frac{1}{|x_i-x_j|}$ is the Coulomb potential and the interaction between the electron and the quantized lattice vibrations (i.e. phonons) is
$$
\phi (x) = \frac{1}{\sqrt 2 \pi}\int\frac{dk}{|k|}\left(a(k)e^{ikx}+a^{\ast}(k)e^{-ikx}\right).
$$
Where $a(k)$ represents the creation- and $a^{\ast}(k)$ the annihilation operator with momentum $k$ and $H_{ph}=\int_{\R^3}dk a^{\ast}(k)a(k)$ denotes the phonon energy. The ground state energy of $H^{(N)}$ is defined by
$$
E^{(N)} (A,V,U,\alpha) = \inf_{\|\psi\|=1}\form{\psi}{H^{(N)}}{\psi}.
$$
For simplicity reasons $E^{(N)} (A,V,U,\alpha)$ sometimes is written as $E^{(N)}$. Because of Lemma~\ref{lm:cutoff} $E^{(N)} (A,V,U,\alpha)$ is bounded from below and hence it exists.

The Fr\"ohlich model is closely related to the Pekar-Tomasevich functional $\EE^{(N)}_{U,\alpha}(A,V,.)$, which for normalized $\ph\in L^2(\R^{3N})$ may be defined by
\begin{align}
\EE^{(N)}_{U,\alpha}(A,V,\ph) = \inf_{\|\eta\|=1}\form{\ph\otimes\eta}{H^{(N)}}{\ph\otimes\eta}.\label{PT-derivation}
\end{align}
See Section~\ref{sec:compsup} for a more explicit definition. The minimal energy of the Pekar-Tomasevich functional with external magnetic and electric fields is denoted by
$$
C_N(A,V,U,\alpha) = \inf_{\|\ph\|=1} \EE^{(N)}_{U,\alpha}(A,V,\ph).
$$ 
Sometimes the short hand $C_N$ instead of $C_N(A,V,U,\alpha)$ is used. By \eqref{PT-derivation} 
\begin{align}
C_N\ge E^{(N)}. \label{ineq:estabove}
\end{align}
The dimensionless constant $\nu:= U/\alpha$ describes the physical region for $\nu >2$. Does there exist a similar estimate converse to \eqref{ineq:estabove}? In the following theorem we give an affirmative answer.

\begin{theorem}\label{mainthm}
For any values of $\nu>0$ and $N$, the following is true:
\begin{itemize}
\item[(a)] Suppose $A, V$ satisfy assumptions (AV1) and \eqref{generalenergyin} described in Section~\ref{sec:proofthm}, then there exists $c(A,V)$
$$
E^{(N)}(A_{\alpha},V_{\alpha},\alpha\nu,\alpha) \ge  \alpha^2 C_N(A,V,\nu,1) - c(A,V) \alpha^{42/23}N^3,
$$
for $\alpha$ large and $A_{\alpha}(x) = \alpha A (\alpha x)$, $V_{\alpha}(x) = \alpha^2 V(\alpha x)$.
\item[(b)] Suppose $A, V$ satisfy assumptions (AV2) and \eqref{generalenergyin} described in Section~\ref{sec:proofthm}, then
\begin{align}
\lim_{\alpha\to\infty} \alpha^{-2} E^{(N)}(A,V,\alpha\nu,\alpha) = C_N(0,0,\nu,1),\quad (\alpha\to\infty),\quad \text{for all } N.\label{eqn:thmb}
\end{align}

\end{itemize}
\end{theorem}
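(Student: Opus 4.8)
The plan is to prove (a) by sandwiching $E^{(N)}$ between a cheap upper bound and a Lieb--Thomas-type lower bound, and then to obtain (b) from the same machinery after letting the unscaled fields wash out. For the upper bound I would start from \eqref{ineq:estabove}, which gives $C_N(A_\alpha,V_\alpha,\alpha\nu,\alpha)\ge E^{(N)}(A_\alpha,V_\alpha,\alpha\nu,\alpha)$, and combine it with the Pekar dilation $\psi(x)\mapsto\alpha^{3N/2}\psi(\alpha x)$ on the electron coordinates together with the induced rescaling of the phonon modes. Under this unitary the covariant kinetic term $D_{A_\alpha}^2$ acquires a factor $\alpha^2$ while $A_\alpha,V_\alpha$ collapse to $A,V$, the repulsion $\alpha\nu V_C$ also scales by $\alpha^2$ (distances shrink by $\alpha$), and the field self-energy obtained by optimizing the coherent parameter in \eqref{PT-derivation} carries $\alpha^2$ as well. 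This should give the exact identity $C_N(A_\alpha,V_\alpha,\alpha\nu,\alpha)=\alpha^2 C_N(A,V,\nu,1)$, and hence $E^{(N)}(A_\alpha,V_\alpha,\alpha\nu,\alpha)\le\alpha^2 C_N(A,V,\nu,1)$; the content of (a) is therefore entirely in the lower bound.

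For the lower bound I would follow the strategy announced in the introduction: cut off, localize, and run Lieb--Thomas group by group. Using Lemma~\ref{lm:cutoff} I first impose an infrared/ultraviolet cutoff $\Lambda_-\le|k|\le\Lambda_+$, replacing $\phi$ by a truncated coupling at a controlled cost; this is what renders the field coupling finite-rank enough for a block decomposition. I then insert an IMS partition of unity in the electron coordinates on a length scale $d$, splitting the $N$ electrons into disjoint groups. Between distinct groups the Coulomb repulsion is nonnegative and may be discarded, while the attractive phonon-mediated cross interaction is bounded below by a term decaying in $d$; the magnetic localization error is the usual $\sum_j|D_{A}\chi_j|^2$ term, kept subleading by the choice of $d$ through assumption (AV1) and \eqref{generalenergyin}. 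Within each group I apply the core Lieb--Thomas estimate: partition the cut-off momentum shell into cells of size $\ell$, replace the field operators in each cell by a single averaged mode, and complete the square in these finitely many oscillators. Up to commutator and discretization errors this produces a classical field energy whose minimization reproduces the Pekar--Tomasevich functional $\EE^{(n_i)}$ for that group, giving the per-group bound $\ge\alpha^2 C_{n_i}$ minus error.

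The reassembly is pure bookkeeping. I would track the three error mechanisms -- the cutoffs $(\Lambda_-,\Lambda_+)$, the cell size $\ell$, and the localization length $d$ -- each contributing a power of $\alpha$, and then choose these scales to balance the competing contributions; this optimization is precisely what produces the exponent $42/23$, a refinement of the classical $\alpha^{9/5}$ Lieb--Thomas rate degraded by the magnetic localization. The $N$-dependence is polynomial: the field couples to the total density $\sum_j\delta(\cdot-x_j)$, so a cell can carry charge of order $N$ and the fluctuation estimates are quadratic in it, while the number of groups and cross terms supplies the remaining factor, yielding $N^3$. Finally the subadditivity $C_N\le\sum_i C_{n_i}$ of the Pekar--Tomasevich ground-state energy upgrades $\sum_i\alpha^2 C_{n_i}$ back to $\alpha^2 C_N$, completing (a).

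For part (b) I would rerun the lower bound with the \emph{unscaled} fields $A,V$ under assumption (AV2). After the Pekar dilation these appear as $A(\cdot/\alpha),V(\cdot/\alpha)$, which converge to their values at the origin; divided by $\alpha^2$ their contribution vanishes as $\alpha\to\infty$, so the per-group functionals tend to the field-free ones and the lower bound becomes $\alpha^2 C_N(0,0,\nu,1)-o(\alpha^2)$. Matching this against the analogous vanishing-field upper bound yields \eqref{eqn:thmb}. The main obstacle throughout is the joint control of the magnetic localization and the field fluctuations: the IMS error now carries the covariant derivative $D_A\chi$, and for a field of linear growth such as a constant magnetic field one must verify via (AV1)/(AV2) and \eqref{generalenergyin} that these terms remain subleading on the chosen scale $d$ while still closing the scale optimization at the stated exponent.
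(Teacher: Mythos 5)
Your overall architecture --- localize the electrons into well-separated groups, control the inter-group interaction, run the Lieb--Thomas block-mode argument on each group, and reassemble via \eqref{generalenergyin} --- is the same as the paper's, and your scaling identity $C_N(A_\alpha,V_\alpha,\alpha\nu,\alpha)=\alpha^2 C_N(A,V,\nu,1)$ is exactly \eqref{scalingproperty} (note, though, that part (a) asserts only a lower bound, so your upper-bound paragraph is needed only for (b)). The genuine gap is the step you dispose of in one clause: ``the attractive phonon-mediated cross interaction is bounded below by a term decaying in $d$.'' This is the central technical point of the multi-polaron problem and does not follow from any soft argument: the coupling $\phi(x)$ has the long-range kernel $|x-y|^{-2}$, and a naive Cauchy--Schwarz estimate on the cross terms costs a multiple of $H_{ph}$ or of the kinetic energy that destroys the per-group Lieb--Thomas bound. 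The paper's mechanism (Proposition~\ref{prop:split}) is to partition the \emph{phonon} field into the Voronoi-type cells $S_i$ around the balls, shift the modes in $S_i$ by the classical field $g_i$ of \eqref{gi} generated by the \emph{other} groups via Weyl operators, and show that the resulting errors $F_1,F_2$ are controlled by $\alpha N\sum_i n_i/d_i$ and by the inter-group Coulomb repulsion (Lemma~\ref{lm:errorterms}); this is the generalization of Lemma~3 of \cite{FLST2011a} announced in the introduction. Without this device (or the Feynman--Kac substitute of \cite{AL2012}, which is unavailable for magnetic Schr\"odinger operators) your proof does not close.

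A second, smaller problem: your recurring concern about a ``magnetic localization error $\sum_j|D_A\chi_j|^2$'' rests on a misconception. The IMS localization error for magnetic Schr\"odinger operators is $\sum_j|\nabla\chi_j|^2$ with the ordinary gradient --- it does not see $A$ at all --- and the paper avoids even discussing cross terms by averaging the localization centers over $y\in\R^{3N}$, so that in \eqref{intequation} the term $2\Rea \sprod{(-i\nabla\phi_y)\psi}{\phi_y D_A\psi}$ integrates to zero and the cost is exactly $3N\pi^2/L^2$. Hence the difficulty you flag for linearly growing $A$ is not there, and attributing the exponent $42/23$ to ``magnetic localization'' is off the mark: it arises from balancing the block count $(2\Lambda/P+1)^3$, the discretization error of order $N^2\alpha P^2r^2\Lambda/(1-\beta)$, the inter-ball error $\alpha N^2/R$, and the $(\beta^{-1}-1)$-correction to the Pekar energy (Proposition~\ref{prop:liebthomas}, Corollary~\ref{cor:energyball}). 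Finally, for (b) the rescaled fields $A_{\alpha^{-1}},V_{\alpha^{-1}}$ converge to $0$ in $L^2_{\rm loc}$ and $L^1_{\rm loc}$ (Lemma~\ref{lm:weakfieldlimit1}); they do not converge ``to their values at the origin,'' which need not exist under (AV2).
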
 


Theorem~\ref{mainthm}~b) shows what one physically would expect, that the ground state energy of the Fr\"ohlich model does not depend on the external fields in the leading order of the coupling constant for $\alpha\to\infty$. The external fields are rescaled such that they are appreciable for large $\alpha$. Furthermore the scaling property from Theorem~\ref{mainthm}~a) ensures that the minimal energy of the electromagnetic Pekar functional is proportional to $\alpha^2$, i.e.
\begin{align}
C_N(A_{\alpha} ,V_{\alpha} ,\alpha\nu ,\alpha) = \alpha^2 C_N(A,V,\nu,1).\label{scalingproperty}
\end{align}

In the case $N=1$ Theorem~\ref{mainthm} was recently proved in \cite{GW2013}. The previous results without external fields are discussed further below.

Furthermore, we want to study the formation of multipolarons in constant magnetic fields. Binding for $N$-polarons is established if
\begin{align}
\Delta E^{(N)} := \min_{1\le k\le N-1}\left(E^{(k)}+E^{(N-k)}\right)-E^{(N)}>0, \label{bindingcond}
\end{align}
and analogously for Pekar-Tomasevich $N$-polarons
\begin{align}
\min_{1\le k\le N-1}\left(C_k+C_{N-k}\right)-C_N>0. \label{bindingcondpek}
\end{align}
We already know binding for Pekar-Tomasevich $N$-polarons in constant magnetic fields for $\nu$ in some neighborhood of $\nu =2$ \cite{AG2013}, hence as a corollary of Theorem~\ref{mainthm}, it follows the existence of bound states for Fr\"ohlich $N$-polarons in strong constant magnetic fields for $\alpha$ large enough. Thus:

\begin{theorem}\label{mainthm2}
For any values of $N$, let $A$ be linear, i.e. the corresponding magnetic field is constant, then there exists $\nu_{N,A}>2$ such that for $\nu<\nu_{N,A}$ and $\alpha$ large enough 
$$
\Delta E^{(N)}(A_{\alpha},0,\alpha\nu,\alpha) > 0, 
$$
where $A_{\alpha}(x) = \alpha A(\alpha x)$.
\end{theorem}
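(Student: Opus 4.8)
The plan is to derive Theorem~\ref{mainthm2} as a corollary of Theorem~\ref{mainthm}~a), transferring the known binding inequality \eqref{bindingcondpek} for Pekar-Tomasevich $N$-polarons in a constant magnetic field (established in \cite{AG2013}) to the Fr\"ohlich binding inequality \eqref{bindingcond}. The central observation is that the scaling property \eqref{scalingproperty} makes both sides of the binding condition homogeneous of degree $\alpha^2$, so the gap in \eqref{bindingcond} inherits a uniform positive lower bound from the Pekar-Tomasevich gap, up to the sub-leading error term $\alpha^{42/23}N^3$, which is of lower order than $\alpha^2$.

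First I would fix $A$ linear and $V=0$, and invoke \cite{AG2013}: there is $\nu_{N,A}>2$ such that for $2<\nu<\nu_{N,A}$ the Pekar-Tomasevich binding gap
\[
\delta_N := \min_{1\le k\le N-1}\bigl(C_k(A,0,\nu,1)+C_{N-k}(A,0,\nu,1)\bigr) - C_N(A,0,\nu,1)
\]
is strictly positive. Next, I would apply Theorem~\ref{mainthm}~a) to estimate $E^{(N)}(A_\alpha,0,\alpha\nu,\alpha)$ from below by $\alpha^2 C_N(A,0,\nu,1)-c(A,V)\alpha^{42/23}N^3$. In the opposite direction, for each summand $E^{(k)}$ and $E^{(N-k)}$ I would use the upper bound \eqref{ineq:estabove} together with the scaling property \eqref{scalingproperty}, namely $E^{(k)}\le C_k(A_\alpha,0,\alpha\nu,\alpha)=\alpha^2 C_k(A,0,\nu,1)$, and similarly for $N-k$. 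Combining these one-sided bounds yields
\[
\Delta E^{(N)}(A_\alpha,0,\alpha\nu,\alpha) \ge \alpha^2\,\delta_N - c(A,V)\,\alpha^{42/23}N^3.
\]

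Since $42/23<2$, the leading term $\alpha^2\delta_N$ dominates the error as $\alpha\to\infty$, so the right-hand side is strictly positive for $\alpha$ large enough; this proves $\Delta E^{(N)}>0$ and hence binding. The main obstacle is bookkeeping around which bound is applied in which direction: Theorem~\ref{mainthm}~a) supplies a \emph{lower} bound on $E^{(N)}$ but the binding gap also requires \emph{upper} bounds on the sub-cluster energies $E^{(k)},E^{(N-k)}$, and these must come from \eqref{ineq:estabove} combined with exact scaling, not from Theorem~\ref{mainthm}~a). One must check that the error constant $c(A,V)$ and the threshold $\nu_{N,A}$ are uniform over the finitely many splittings $1\le k\le N-1$ (this is immediate since there are finitely many terms and one may take the worst case), and that the linear vector potential $A$ genuinely falls under assumption (AV1) so that Theorem~\ref{mainthm}~a) is applicable with $V=0$; granting these, the argument closes.
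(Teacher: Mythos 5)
Your overall strategy is the one the paper intends: Theorem~\ref{mainthm2} is meant to follow from Theorem~\ref{mainthm}~a), the scaling identity \eqref{scalingproperty}, inequality \eqref{ineq:estabove}, and the Pekar--Tomasevich binding result of \cite{AG2013}, the whole point being that the error exponent $42/23$ is smaller than $2$. However, you have attached the two one-sided bounds to the wrong quantities, and with the bounds exactly as you state them the conclusion does not follow. Since $\Delta E^{(N)}=\min_{1\le k\le N-1}\bigl(E^{(k)}+E^{(N-k)}\bigr)-E^{(N)}$, proving $\Delta E^{(N)}>0$ requires a \emph{lower} bound on the dissociated energies $E^{(k)}+E^{(N-k)}$ and an \emph{upper} bound on $E^{(N)}$. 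You do the opposite: from $E^{(k)}\le \alpha^2 C_k(A,0,\nu,1)$, $E^{(N-k)}\le \alpha^2 C_{N-k}(A,0,\nu,1)$ and $E^{(N)}\ge \alpha^2 C_N(A,0,\nu,1)-c\,\alpha^{42/23}N^3$ one can only deduce
$$
\Delta E^{(N)}(A_\alpha,0,\alpha\nu,\alpha)\ \le\ \alpha^2\delta_N + c\,\alpha^{42/23}N^3,
$$
an \emph{upper} bound on the binding gap, which says nothing about positivity; your displayed inequality $\Delta E^{(N)}\ge\alpha^2\delta_N-c\,\alpha^{42/23}N^3$ is not a consequence of the bounds you quote. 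The correct combination is the mirror image: apply Theorem~\ref{mainthm}~a) to each sub-cluster, $E^{(k)}(A_\alpha,0,\alpha\nu,\alpha)\ge\alpha^2 C_k(A,0,\nu,1)-c\,\alpha^{42/23}k^3$ and likewise for $N-k$, and apply \eqref{ineq:estabove} together with \eqref{scalingproperty} to the full system, $E^{(N)}(A_\alpha,0,\alpha\nu,\alpha)\le C_N(A_\alpha,0,\alpha\nu,\alpha)=\alpha^2 C_N(A,0,\nu,1)$. This yields $\Delta E^{(N)}\ge\alpha^2\delta_N-2c\,\alpha^{42/23}N^3$, which is positive for $\alpha$ large because $42/23<2$. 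The fix is purely a matter of swapping the roles of the two estimates, but as written the argument fails at its central step --- ironically the very step you single out as ``the main obstacle.''

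Two smaller points. First, you invoke the Pekar--Tomasevich gap $\delta_N>0$ only for $2<\nu<\nu_{N,A}$, whereas the theorem asserts the conclusion for all $\nu<\nu_{N,A}$; the binding statement \eqref{bindingcondpek} from \cite{AG2013} covers the full range $\nu<\nu_{N,A}$, and you should quote it that way so that small $\nu$ is not left out. Second, Theorem~\ref{mainthm}~a) requires not only (AV1) but also the subadditivity \eqref{generalenergyin}; for linear $A$ and $V=0$ this is exactly case~2) in the list following \eqref{generalenergyin}, so it holds, but it must be checked alongside (AV1) rather than omitted.
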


\noindent\emph{Remark.} Suppose $A, V$ satisfy \eqref{generalenergyin} and assumption (AV2) described in Section~\ref{sec:proofthm}, then there exists $\nu_N>2$ such that for $\nu <\nu_N$ and $\alpha$ large enough 
\begin{align}
\Delta E^{(N)}(A,V,\alpha\nu,\alpha) > 0. \label{bindingwithoutscaling}
\end{align}
This follows from \eqref{eqn:thmb} and the binding of $N$-polarons in the Pekar-Tomasevich model without external fields (see \cite{L2011} or \cite{AG2013} for $A=0$). In other words, in the leading order for $\alpha\to\infty$ the binding energy does not depend on the (non-scaled) external fields.

For $N=1$ without external fields a first proof of Theorem~\ref{mainthm} was given by means of stochastic integration by Donsker and Varadhan \cite{DV1983}, however they did not mention an explicit error bound. Later, Lieb and Thomas \cite{LT1997} gave another proof using operator theoretical methods, which was the basis for subsequent generalizations, i.e. to the case of polarons subject to electromagnetic fields \cite{GW2013} and to the case of $N$-polarons without external fields \cite{AL2012}. Since the idea of the proof in \cite{LT1997} only applies to multipolarons in a neighborhood of one another, Theorem~\ref{mainthm} can not simply be adapted. Considering that, in Proposition~\ref{prop:split} we estimate the interaction-energy between different clusters of multipolarons by a generalization of a lemma recently appeared in \cite{FLST2011a}.

The basic idea of our proof of Theorem~\ref{mainthm2} goes back to Miyao and Spohn \cite{MS2007}, where they proved formation of bipolarons. They argued that in the strong coupling regime, binding for bipolarons is implied by the binding for Pekar-Tomasevich bipolarons and the fact that in the leading order for $\alpha\to\infty$ the Fr\"ohlich ground state energy is exactly described by the Pekar minimal energy. By a similar reasoning, the existence of bipolarons subject to electromagnetic fields was recently derived in \cite{GW2013} with the help of binding of the corresponding Pekar bipolarons \cite{GHW2012}. For binding of $N$-polarons, but without external fields we refer to \cite{AL2012, L2011}. There are further binding and non-binding results in the mathematical and physical literature. Namely non-binding for $N$-polarons without external fields have been proved for the Fr\"ohlich model and the Pekar functional for sufficiently large values of $\nu>0$ \cite{FLST2011a}. Numerical calculations suggest that binding for bipolarons does not occur for small couplings \cite{VPD90,VSPD1992}, but there exists no rigorous proof yet.\\


\noindent\emph{Acknowledgements.} It is a pleasure to thank Marcel Griesemer and Ioannis Anapolitanos for many fruitful discussions, for their comments and corrections. This work is supported by the German Science Foundation (DFG), grant GR 3213/1-1.

\section{Preparations and Structure of the Proof} \label{sec:proofthm}
Let the aforementioned external potential $V:\R^3\to\R$ be form-bounded with bound zero, i.e. for all $\eps>0$ there exists $C_{\eps}>0$
\begin{align}
|\sprod{\ph}{V\ph}|\le \eps \|\nabla\ph\|^2 + C_{\eps}\|\ph\|^2,\quad \ph\in C_0^{\infty}(\R^3).\label{Vformbound}
\end{align}
Two different assumptions on the magnetic and electric fields are given
\begin{itemize}
\item[(AV1)] $A_k\in L^2_{loc}(\R^3)$, $V\in L^1_{loc}(\R^3)$ and \eqref{Vformbound},
\item[(AV2)] $A_k\in L^3_{loc}(\R^3)$, $V\in L^{3/2}_{loc}(\R^3)$ and \eqref{Vformbound}.
\end{itemize}
Obviously (AV2) is contained in (AV1). If nothing is mentioned, always (AV1) is supposed. (AV1) ensures that the quadratic form $\sum_{j=1}^N\sprod{D_{A,x_j}\ph}{D_{A,x_j}\ph}$ on $C_0^{\infty}(\R^{3N})$ is well defined. It is closable and the domain of the closure is $H^1_A(\R^{3N}):=\{\ph\in L^2(\R^{3N})|(-i\partial_{x_j,\ell}+A_{\ell}(x_j))\ph\in L^2(\R^{3N}), 1\le \ell\le 3, 1\le j\le N\}$ (see \cite{AHS1978}).

We recall the important diamagnetic inequality that is frequently used in the present paper. It states, that if $\ph \in H^1_A(\R^3)$, then $|\ph |\in H^1(\R^3)$ and
$$
|\nabla|\ph| (x)|\le |D_A\ph (x)|,\quad \text{pointwise for almost every } x\in \R^3.
$$
For a proof see \cite{LL1997}.

(AV1) and the diamagnetic inequality imply that $\sum_{j=1}^N\sprod{D_{A,x_j}\ph}{D_{A,x_j}\ph} + \sprod{\ph}{V(x_j)\ph}$ is a closed quadratic form on $H^1_A(\R^{3N})$. Since $\int dk |k|^{-1} a^{\ast}(k)$ makes no sense on $\FF_0 =\{(\ph^{(n)})\in\FF|\ph^{(n)}\in C_0(\R^{3N}), \ph^{(n)} =0 \text{ for all but finitely many } n\}$, the Fr\"ohlich hamiltonian is not well-defined on $Q=C_0^{\infty}(\R^{3N})\otimes \FF_0$. This drawback is avoided by interpreting $H^{(N)}$ as a quadratic form $\form{\psi}{H^{(N)}}{\psi}$ on $\QQ$. Because of Lemma~\ref{lm:cutoff} it is closable and semibounded on $\QQ$, therefore the closure is a quadratic form of a self-adjoint operator.

Further, we assume the following energy inequality
\begin{align}
C_{n}+C_m \ge C_{m+n},\quad\text{for } m+n\le N.\label{generalenergyin}
\end{align}
The following choices of potentials $A,V$ satisfy \eqref{generalenergyin}.
\begin{itemize}
\item[1)] There exists $w\in\R^3$ and $f\in H^2(\R^3)$, $f(x+w)=f(x)$: $A(x+w) = A(x) +\nabla f(x)$ and $V(x+w) = V(x)$ (periodic electric potential and periodic magnetic field).\\
\textit{Proof.} Let $\ph_i\in L^2(\R^{3n_i})$, $1\le i\le 2$ be approximative minimizers of $\EE^{(n_i)}_{\nu,1}(A,V,.)$ up to an error of $\eps$. We define the discrete magnetic translation by
$$
\ph_2^k(x) = \ph_2(x+kw) e^{if(x)k},\quad k\in\Z
$$ 
then $\|D_A\ph_2^k\| = \|D_A\ph_2\|$ for all $k\in\Z$. Hence
$$
\EE^{(n_1+n_2)}_{\nu,1}(A,V,\ph_1\otimes\ph_2^k)<\sum_{i=1}^2 C_{n_i}(A,V,\nu,1) +2\eps +o(1)_{k\to\infty},
$$
where $o(1)_{k\to\infty}$ stems from the mixing terms of the self-interaction and the Coulomb interaction of the first $n_1$ and the last $n_2$ particles.
\item[2)] $A$ linear, and $V\in L^{\infty} (\R^3)$, $V\ge 0$, $\lim_{|x|\to\infty} V(x) =0$.
\end{itemize}

\textbf{Structure of the Proof.} Our proof of Theorem~\ref{mainthm} is a generalization of the $N$-polaron case without external fields \cite{AL2012}. In \cite{AL2012} the polarons are divided into clusters in order to distinguish the ones that are in a neighborhood of each other to the ones that are not. With the help of a formula from Feynman and Kac (see Lemma~1 of \cite{FLST2011a}), derived by stochastic integration, the energy of the inter-cluster interactions is bounded from above. The formula from Feynman and Kac seems not to be easily generalizable to magnetic Schr\"odinger operators. Instead, in this paper the polarons are grouped into disjoint balls with sufficiently large distances to each other and bounded radii. The localization and the regrouping into suitable balls is done in Lemma~\ref{lm:localization}. In Proposition~\ref{prop:split} then the energy of the inter-ball interactions are estimated by a generalization of Lemma~3 of \cite{FLST2011a}. 

In the next step, the energies of $N$-polarons localized in balls are bounded from below by the ground state energy of the $N$-particle Pekar-Tomasevich functional. A proof is done in Proposition~\ref{prop:liebthomas}, which is based on \cite{LT1997}. From the proof it is also clear that the estimate does not depend on the concrete centers of the balls, although the fields $A,V$ do not have to be translation invariant.

\section{Estimate of the Multipolaron Interaction Energy}\label{sec:estintenerg}

The $N$-polarons are first localized into $N$ arbitrarily distributed equal sized balls. These balls then can be grouped in the following manner: There exist bigger disjoint balls that contain the smaller ones, additionally each radius is bounded in terms of the number of smaller balls in the corresponding bigger one. The following lemma addresses this issue.

\begin{lemma}\label{lm:localization}
Suppose $R >0$, then for every normalized $\psi\in \QQ$ there exists a normalized $\psi_0\in\QQ$ satisfying 
\begin{align}
\form{\psi}{H^{(N)}}{ \psi}\ge \form{\psi_0}{H^{(N)} }{\psi_0}-\frac{9N\pi^2}{4R^2}\label{ineq:localization}
\end{align}
and $\supp\psi_0\subset \bigtimes_{i=1}^m B_i^{n_i}$, $B_i^{n_i} = \bigtimes_{j=1}^{n_i} B_i$. Here $B_i$ are balls with radius $R_i$, $n_i>0$, $\sum_{i=1}^m n_i =N$ such that
\begin{itemize}
\item[(i)] $\dist (B_i,B_j)\ge R$ for $i\ne j$,
\item[(ii)] $R_i= \frac{1}{2}(3n_i-1)R$.
\end{itemize}

\end{lemma}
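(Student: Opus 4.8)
The plan is to combine an IMS-type localization, which produces the stated energy error, with a purely geometric clustering of balls, which produces the stated support structure. First I would fix the lattice spacing $d=R/\sqrt3$ and build a smooth partition of unity on $\R^3$ adapted to the cubic lattice $d\Z^3$. Using the one-dimensional profile $\cos$ on overlapping intervals of length $2d$ one obtains functions $\chi_z$, $z\in d\Z^3$, with $\sum_z\chi_z^2\equiv1$, $\supp\chi_z$ contained in the cube of side $2d$ about $z$ (hence in the ball of radius $\sqrt3\,d=R$ about $z$), and the pointwise identity $\sum_z|\nabla\chi_z(x)|^2=3\pi^2/(4d^2)=9\pi^2/(4R^2)$. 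Tensorizing over the $N$ electron coordinates, $J_{\mathbf a}(x_1,\dots,x_N)=\prod_{j=1}^N\chi_{a_j}(x_j)$ for multi-indices $\mathbf a=(a_1,\dots,a_N)\in(d\Z^3)^N$, gives $\sum_{\mathbf a}J_{\mathbf a}^2\equiv1$ and $\sum_{\mathbf a}\sum_j|\nabla_{x_j}J_{\mathbf a}|^2\equiv N\cdot9\pi^2/(4R^2)$.

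Next I would apply the IMS localization formula to the quadratic form of $H^{(N)}$. Since each $J_{\mathbf a}$ is a multiplication operator in the electron positions, it commutes with $V$, $V_C$, $H_{ph}$ and the field couplings $\phi(x_j)$, so only the kinetic terms contribute to the double commutator; because $[J,[J,D_{A,x_j}^2]]=-2|\nabla_{x_j}J|^2$, the vector potential drops out of the localization error exactly as in the non-magnetic case. This yields $\form{\psi}{H^{(N)}}{\psi}=\sum_{\mathbf a}\form{J_{\mathbf a}\psi}{H^{(N)}}{J_{\mathbf a}\psi}-\tfrac{9N\pi^2}{4R^2}$. As $\psi$ has compact support in the electron variables, only finitely many $\mathbf a$ contribute; writing each nonzero term as $\norm{J_{\mathbf a}\psi}^2\form{\psi_{\mathbf a}}{H^{(N)}}{\psi_{\mathbf a}}$ with normalized $\psi_{\mathbf a}=J_{\mathbf a}\psi/\norm{J_{\mathbf a}\psi}$ and bounding the convex combination below by its minimal term, I obtain $\form{\psi}{H^{(N)}}{\psi}\ge\form{\psi_0}{H^{(N)}}{\psi_0}-\tfrac{9N\pi^2}{4R^2}$ with $\psi_0:=\psi_{\mathbf a^\ast}$ for a minimizing $\mathbf a^\ast$. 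By construction $\psi_0\in\QQ$, and the $j$-th electron of $\psi_0$ is supported in the ball $\tilde B_j$ of radius $R$ centered at $a^\ast_j$; this is the initial localization into $N$ equal balls.

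Finally I would group the balls $\tilde B_1,\dots,\tilde B_N$ by iterative merging, assigning to every cluster of $n$ particles a ball of radius $\tfrac12(3n-1)R$ and starting from the singletons $\tilde B_j$ (radius $R=\tfrac12(3\cdot1-1)R$). As long as two current balls $B(c_1,\rho_1)$, $B(c_2,\rho_2)$ satisfy $\dist(B(c_1,\rho_1),B(c_2,\rho_2))<R$, I merge their clusters into a ball of radius $\tfrac12(3(n_1+n_2)-1)R$: this value is $\ge\max\{\rho_1,\rho_2\}$, and from $|c_1-c_2|<R+\rho_1+\rho_2$ it also dominates $\tfrac12(|c_1-c_2|+\rho_1+\rho_2)=\tfrac12R+\rho_1+\rho_2=\tfrac12(3(n_1+n_2)-1)R$, so (placing the center on $[c_1,c_2]$) the new ball contains both old ones. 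Hence the radius law (ii) and the containment of the underlying supports are preserved at every step, and the process stops after at most $N-1$ merges with all remaining balls $B_1,\dots,B_m$ pairwise at distance $\ge R$, which is (i). Since each electron's support then lies in some $B_i$, after relabelling the electrons (permitted by the permutation symmetry of $H^{(N)}$) we get $\supp\psi_0\subset\bigtimes_{i=1}^m B_i^{n_i}$.

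I expect the clustering step to be the main obstacle: the delicate point is the simultaneous bookkeeping ensuring that the single radius law $\tfrac12(3n-1)R$ both accommodates the enclosing ball at each merge and forces the $\ge R$ separation at termination. Making this consistent with the IMS error is what pins down the lattice spacing $d=R/\sqrt3$, so that the one-particle balls have radius exactly $R$ and the localization error equals exactly $9N\pi^2/(4R^2)$.
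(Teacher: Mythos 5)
Your proof is correct and follows essentially the same strategy as the paper's: a cosine-based IMS localization with error exactly $\tfrac{9N\pi^2}{4R^2}$ followed by selecting the best localized (normalized) piece, and then the same greedy merging of the $N$ radius-$R$ balls into disjoint clusters of radius $\tfrac12(3n_i-1)R$ at mutual distance $\ge R$. The only inessential difference is in the first step: you use a discrete lattice partition of unity and pick the minimizing cell, whereas the paper averages a single cutoff over all continuous translations $y\in\R^{3N}$ and picks the best translate; both give the same constants and the same support structure.
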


\begin{proof}
In Step~1 $\psi\in \QQ$ is localized. More explicitly: We show that for every $\psi\in \QQ$ there exists $\tilde\psi_0\in\QQ$ satisfying \eqref{ineq:localization} and $\supp\tilde\psi_0\subset\bigtimes_{k=1}^N B_R(y_k)$, where $B_R(y_k)$ are balls with radius $R$ and centers $y_k\in\R^3$. Then in Step~2 we regroup the $N$ balls $B_R(y_k)$ found in Step~1 and inscribe them into disjoint bigger balls, i.e. we prove the existence of balls $B_i$, $1\le i\le m$, where $m\le N$, satisfying (i), (ii) and of a permutation $\sigma\in S_{N}$ such that $\bigtimes_{k=1}^N B_R(y_{\sigma (k)})\subset \bigtimes_{i=1}^m B_i^{n_i}$. The lemma then follows by $\psi_0(x_1,\ldots,x_N) := \tilde\psi_0( x_{\sigma^{-1}(1)},\ldots,x_{\sigma^{-1}(N)})$ and the fact that $\form{.}{H^{(N)}}{.}$ is invariant under permutations of the variables $x_1,\ldots , x_N$.\\

\noindent\emph{Proof of Step~1.} For arbitrary $L>0$, which later will be chosen as $L = 2R/\sqrt 3$, a suitable localization function on $\R^{3N}$ is defined by
$$
\phi (x) := \prod_{j=1}^{3N} \cos (x_j\pi /L)  \chi_{[-L/2, L/2]} (x_j)\quad\text{and}\quad \phi_y(x):= \phi(x-y),\quad y\in\R^{3N}.
$$
Thus $\phi_y$ is supported in a $3N$-dimensional cube of sidelength $L$ and center $y$. 

By straightforward calculations
\begin{align}
&\int_{\R^{3N}} dy \form{\phi_y\psi}{H^{(N)}}{\phi_y\psi} \nonumber\\
&= \int_{\R^{3N}} dy\left[\form{\psi}{H^{(N)}}{\psi}\|\phi_y\psi\|^2+2\Rea \sprod{(-i\nabla\phi_y)\psi}{\phi_yD_A\psi} +\|(-i\nabla\phi_y)\psi\|^2\right]\nonumber\\
&=\int_{\R^{3N}} dy\left[ \form{\psi}{H^{(N)}}{\psi}\|\phi_y\psi\|^2+\frac{3N\pi^2}{L^2}\|\phi_y\psi\|^2\right].\label{intequation}
\end{align}
By \eqref{intequation}
$$
\int_{\R^{3N}} dy\left[\form{\phi_y\psi}{H^{(N)}}{\phi_y\psi} - \left( \frac{3N\pi^2}{L^2}+ \form{\psi}{H^{(N)}}{\psi}\right)\|\phi_y\psi\|^2\right] =0.
$$
Hence there exists $y=(y_1,\ldots , y_N)$, $y_k\in\R^3$, $1\le k\le N$, such that 
$$
\form{\phi_y\psi}{H^{(N)}}{\phi_y\psi} \le \left( \frac{3N\pi^2}{L^2}+ \form{\psi}{H^{(N)}}{\psi}\right)\|\phi_y\psi\|^2
$$
and $\|\phi_y\psi\| \ne 0$. The support of $\tilde\psi_0 :=\phi_y\psi\|\phi_y\psi\|^{-1}$ is contained in the cartesian product of $N$ boxes of sidelength $L$ and centers $y_k\in\R^3$, $1\le k\le N$, and since $L=2R/\sqrt 3$ then the support is also located in $\bigtimes_{k=1}^N B_R(y_k)$.\\

\noindent\emph{Proof of Step~2.} Proof by induction in $N$. For $N=1$ the statement is trivial. Now let us assume that for some $N$ there exist $m\le N$, balls $B_1,\ldots B_m$ and a permutation $\sigma\in S_N$ such that $\bigtimes_{k=1}^{N}B_R(y_{\sigma (k)})\subset \bigtimes_{i=1}^{m}B_i^{n_i}$ and (i), (ii) hold. For $N+1$ balls $B_R(y_k)$ two cases can arise.

Case 1: There exists $1\le i\le m$ such that $\dist (B_i, B_R(y_{N+1}))\ge R$. Then we define $B_{m+1}:= B_R(y_{N+1})$. Thus $\bigtimes_{k=1}^{N+1}B_R(y_{\tilde\sigma (k)}) \subset \bigtimes_{i=1}^{m+1}B_i^{n_i}$ for $n_{m+1} =1$ and (i), (ii) is satisfied for all balls $B_i$ and where $\tilde\sigma \in S_{N+1}$ such that $\tilde\sigma (k) = \sigma (k)$, $1\le k\le N$ and $\tilde\sigma (N+1) = N+1$.

Case 2: There exists $i_1\in \{1,\ldots m\}$ such that $\dist (B_{i_1}, B_R(y_{N+1}))< R$. Then there is a ball $B^{(1)}\supset B_{i_1}\cup B_R(y_{N+1})$ with radius $\frac{1}{2}(3(n_{i_1}+1)-1)R$. If there is a $i_2\in\{1,\ldots m\}\setminus\{i_1\}$ with $\dist (B_{i_2}, B^{(1)})< R$, then there exists a ball $B^{(2)}\supset B^{(1)}\cup B_{i_2}$ with radius $\frac{1}{2}(3(n_{i_1}+n_{i_2}+1)-1)R$. By repeating this procedure Step~2 is proved by choosing a convenient permutation.
\end{proof}

Let $n\ge 1$ and let $\Omega\subset\R^3$ be a measurable set, then we define
$$
E_n(\Omega) = \inf_{\substack{\supp\ph\subset \Omega^{n}\\ \|\ph\| =1}}\form{\ph}{H^{(n)}}{\ph}.
$$

With the help of Lemma~\ref{lm:localization} any wave function $\psi\in\QQ$ can be localized into a collection of disjoint balls. The proposition below specifies a concrete estimate for the inter-ball interactions.

\begin{prop}\label{prop:split}
Let $N$ be any positive integer and let $A,V$ satisfy (AV1). Suppose $\psi \in \QQ$ is normalized with $\supp \psi\subset\bigtimes_{i=1}^m B_i^{n_i}$. Let $B_i$ be balls with radius $R_i$ and define $d_i:=\min_{j\ne i}\dist (B_i, B_j)>0$ for $1\le i\le m$. Then
\begin{align}
\form{\psi}{H^{(N)}}{\psi} \ge \sum_{i=1}^m E_{n_i}(B_i)  + (U-2\alpha)\sum_{i<j}\sum_{\substack{s_i\in C_i\\ {\ell}_j\in C_j}}\sprod{\psi}{ \frac{1}{|x_{s_i}-x_{{\ell}_j}|}\psi}-\frac{8\alpha N}{\pi^2}\sum_{i=1}^m\left(\frac{n_i}{d_i}\right).\label{prop2}
\end{align}
$C_i$ denotes the index set of the electrons supported in $B_i$.
\end{prop}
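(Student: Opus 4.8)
The plan is to split $H^{(N)}$ into pieces attached to the individual balls plus genuine inter-ball terms, exploiting that the external potentials enter only through the one-particle operators. Writing $\sum_{j=1}^N(D_{A,x_j}^2+V(x_j))=\sum_{i=1}^m\sum_{j\in C_i}(D_{A,x_j}^2+V(x_j))$ and splitting the Coulomb potential as $V_C=\sum_{i=1}^m\sum_{s<\ell\in C_i}|x_s-x_\ell|^{-1}+\sum_{i<j}\sum_{s_i\in C_i,\ell_j\in C_j}|x_{s_i}-x_{\ell_j}|^{-1}$, the diagonal one-particle and intra-ball Coulomb pieces are exactly the ingredients entering $H^{(n_i)}$ on $B_i^{n_i}$, while the off-diagonal Coulomb sum yields the explicit term $U\sum_{i<j}\sum_{s_i,\ell_j}\sprod{\psi}{|x_{s_i}-x_{\ell_j}|^{-1}\psi}$. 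Since $A,V$ occur only in $D_{A,x_j}^2+V(x_j)$, they are absorbed without extra work into the definition of $E_{n_i}(B_i)$; the whole difficulty is therefore concentrated in the phonon sector, where the estimate is essentially identical to the field-free situation of Lemma~3 of \cite{FLST2011a}, which I would adapt.

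For the phonon sector I would write $\phi(x)=a(h_x)+a^*(h_x)$ with $h_x(k)=(\sqrt2\pi|k|)^{-1}e^{-ikx}$, whose position-space form factor is $h_x(y)=\pi^{-3/2}|y-x|^{-2}$, and use the crucial fact that $H_{ph}=\int a^*(y)a(y)\,\rd y$ is \emph{spatially local}. Setting $G_i=\sqrt\alpha\sum_{s\in C_i}h_{x_s}$, the key algebraic input is $\sprod{h_{x_s}}{h_{x_\ell}}=|x_s-x_\ell|^{-1}$ for $s\ne\ell$, so that the cross terms in $\norm{\sum_iG_i}^2$ equal $2\alpha\sum_{i<j}\sum_{s_i,\ell_j}|x_{s_i}-x_{\ell_j}|^{-1}$; this is the origin of the attractive $-2\alpha$. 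I would then partition $\R^3$ into the Voronoi cells $\Omega_i$ of the balls (points closest to $B_i$), so that every $y\in\Omega_i^c$ satisfies $\dist(y,B_i)\ge d_i/2$, and $H_{ph}=\sum_i\int_{\Omega_i}a^*(y)a(y)\,\rd y$ distributes the phonon energy over the cells \emph{without double counting}. Localizing the coupling of each cluster to its own cell and completing the square cell-by-cell should simultaneously reproduce the attraction $-2\alpha\sum_{i<j}\sum|x-y|^{-1}$ from the cross terms and leave each cluster a full-field lower bound: extending any localized trial state by the Fock vacuum on $\Omega_i^c$ shows the localized cluster energy is $\ge E_{n_i}(B_i)$.

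The error should come from the part of each electron's coupling that reaches outside its own cell, collected in $W=\sqrt\alpha\sum_s h_{x_s}\chi_{\Omega_{c(s)}^c}$, where $c(s)$ denotes the cluster containing electron $s$. Using the tail identity $\int_{|y-x|\ge d_i/2}|h_x(y)|^2\,\rd y=8/(\pi^2d_i)$ together with a Cauchy--Schwarz estimate over the $N$ electrons gives $\norm{W}^2\le\frac{8\alpha N}{\pi^2}\sum_i n_i/d_i$, which is precisely the claimed cutoff-free error.

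The step I expect to be the main obstacle is the control of this residual coupling $a(W)+a^*(W)$: it is linear in the field and sign-indefinite, so it can only be absorbed against phonon energy, yet that same single phonon energy is exactly what is needed to pin each cluster down to $E_{n_i}(B_i)$ rather than to the (strictly lower) Pekar value. Any naive completing-the-square either over-subtracts the UV self-energy of the $G_i$ (producing cutoff-dependent errors and the wrong power of $d_i$) or collapses the intra-cluster field energy. Reconciling these two competing demands --- extracting the exact Coulomb-type attraction and the clean tail error $\norm{W}^2$ while counting the shared phonon energy only once --- is the delicate heart of the argument; it is what the generalized Lemma~3 of \cite{FLST2011a} accomplishes, and verifying that this phonon manipulation is untouched by the magnetic and electric potentials (which act only on the one-particle operators $D_{A,x_j}^2+V(x_j)$) is the essential point of the present generalization.
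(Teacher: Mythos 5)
Your proposal reproduces the paper's own strategy: the same Voronoi cells (the paper's $S_i$, your $\Omega_i$), the same position-space form factor $\pi^{-3/2}|x-y|^{-2}$, the same identification of the $-2\alpha$ attraction with the cross inner products $\sprod{h_{x_{s_i}}}{h_{x_{\ell_j}}}=|x_{s_i}-x_{\ell_j}|^{-1}$, the same tail error (your $W$ is exactly the paper's $\sum_i g_i$, and $F_1=\sum_i\norm{g_i}^2=\norm{W}^2$ because the $g_i$ have disjoint supports), and the same vacuum-extension argument giving $E_{n_i}(B_i)$ for each cluster (Lemma~\ref{Kiabsch}).

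The one step you flag as the unresolved ``delicate heart'' is settled by the precise choice of shift in \eqref{agi}--\eqref{gi}: on the cell $S_i$ one sets $\hat a_i(y)=\hat a(y)-g_i(y)$, where $g_i$ contains \emph{only the foreign electrons'} couplings (those $\ell_j$ with $j\ne i$) restricted to $S_i$. Three things then happen simultaneously. First, the coupling of the electrons in $C_i$ to the field on their own cell, together with the full phonon energy $\int_{S_i}\hat a_i^*\hat a_i$, is left untouched inside $K_i$; nothing is ``used up'', which is why $K_i$ is still bounded below by the full Fr\"ohlich energy $E_{n_i}(B_i)$ rather than by a Pekar value. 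Second, the self-energy produced by the shift is exactly $\sum_i\norm{g_i}^2=\norm{W}^2$, which is UV-finite with the clean power of $d_j$ because every point of $S_i\subset S_j^c$ is at distance at least $d_j/2$ from $x_{\ell_j}\in B_j$ (Lemma~\ref{lm:errorterms}~a)). Third, the cross term between $g_i$ and the native couplings is supported on $S_i\cup S_j$ and is bounded by $2\alpha\sum_{s_i,\ell_j}|x_{s_i}-x_{\ell_j}|^{-1}$ using $\chi_{S_i}+\chi_{S_j}\le1$ (Lemma~\ref{lm:errorterms}~b)). In other words, the attraction and the tail error are both generated algebraically by the shift itself, not by absorbing a residual linear term $a(W)+a^*(W)$ against the phonon energy, so the two ``competing demands'' never actually compete and the phonon energy is counted once per cell. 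Your expectation that the magnetic and electric potentials play no role in this phonon manipulation is correct; they ride along inside $T_{\ell_i}=D_{A,x_{\ell_i}}^2+V(x_{\ell_i})$.
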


Our proof of Proposition~\ref{prop:split} is a generalization of Lemma~3 of \cite{FLST2011a}, where the two-particle case was studied. The proof in \cite{FLST2011a} shows, that it is useful to localize the phonon field about the respective particles. The phonon field is divided into two half-spaces each including one particle. In our case we have $N$ polarons that are localized in $m$ balls $B_i$ containing $n_i$ particles. It turns out that it is suitable to split up the phonon field in such a way, that every point of it is allocated to the nearest ball. We define
$$
S_i:= \{ y\in\R^3| \dist(B_i,y)<\dist (B_j,y), j\ne i \}.
$$
Since $\dist (B_i, B_j) >0$ for $i\ne j$, the definition especially ensures $B_i\subset S_i$ for $1\le i\le m$ and
\begin{align}
\overline{\bigcup_i S_i}=\R^3,\quad\text{where}\quad S_i\cap S_j =\emptyset \text{ for } i\ne j. \label{sumsi}
\end{align}
Magnetic and electric fields that satisfy (AV1) can easily be added in Lemma~3 of \cite{FLST2011a}, since no special properties of the laplacian are needed.

\begin{proof}[Proof of Proposition~\ref{prop:split}.]
It is useful to rearrange the Fr\"ohlich hamiltonian \eqref{frhamiltonian} such that it allows for the partition into balls
\begin{align}
H^{(N)} = \sum_{i=1}^m \left(\sum_{{\ell}_i\in C_i}\left(T_{{\ell}_i}-\sqrt\alpha\phi(x_{{\ell}_i})\right)+U\sum_{\substack{s_i,{\ell}_i\in C_i\\ s_i < {\ell}_i}}\frac{1}{|x_{s_i}-x_{{\ell}_i}|}\right)  +H_{ph}+U\sum_{i<j}\sum_{\substack{s_i\in C_i\\ {\ell}_j\in C_j}}\frac{1}{|x_{s_i}-x_{{\ell}_j}|},\label{hamiltonrearr}
\end{align}
with $T_{{\ell}_i} = D_{A,x_{{\ell}_i}}^2+V(x_{{\ell}_i})$.
Define
$$
\hat a (x) = \frac{1}{(2\pi )^{3/2}}\int dk e^{ikx} a(k),\quad \hat a^{\ast} (x) = \frac{1}{(2\pi )^{3/2}}\int dk e^{-ikx} a^{\ast}(k).
$$
$\hat a (x) $ is a properly defined operator on the Fock space, but $\hat a^{\ast} (x)$ is not. Below, the operators are interpreted as quadratic forms, in which case they are well-defined. By Plancherel
\begin{align}
\phi(x)=\frac{1}{\pi^{3/2}}\int dy\frac{\hat a (y)+\hat a^{\ast} (y)}{|x-y|^2},\quad H_{ph}= \int \hat a^{\ast} (y)\hat a (y)dy\label{ft:operators}.
\end{align}
Let $1\le i\le m$, then for the surrounding $S_i$ of $B_i$ we associate the annihilation operator $\hat a_i (y)$. It is defined by
\begin{align}
\hat a_i (y)= \hat a (y)-g_i(y),\quad y\in S_i,\label{agi}
\end{align}
where
\begin{align}
g_i(y) =\frac{\sqrt{\alpha}}{\pi^{3/2}}\sum_{\substack{j=1\\ j\ne i}}^m\sum_{{\ell}_j\in C_j}\frac{1}{|x_{{\ell}_j}-y|^2}\chi_{S_i} (y).\label{gi}
\end{align}
Using \eqref{sumsi}, \eqref{ft:operators} and \eqref{agi} and the phonon energy $H_{ph}$ becomes
\begin{align}
H_{ph}= \sum_{i=1}^m\int_{S_i} dy\hat a_i^{\ast} (y)\hat a_i (y) + \sum_{i=1}^m\int_{S_i} dy (\hat a_i (y)+\hat a_i^{\ast} (y))g_i(y) + F_1,\label{hphonon}
\end{align}
with multiplication operator
\begin{align}
F_1 = \sum_{i=1}^m\|g_i\|^2.\label{errortermf1}
\end{align}
Using \eqref{sumsi} and \eqref{ft:operators} the interaction-term $\phi(x)$ splits up into two parts
\begin{align}
\sqrt\alpha \sum_{i=1}^m\sum_{{\ell}_i\in C_i}\phi(x_{{\ell}_i}) =& \frac{\sqrt{\alpha}}{\pi^{3/2}}\sum_{i=1}^m\sum_{{\ell}_i\in C_i}\int_{S_i}dy\frac{\hat a (y)+\hat a^{\ast} (y)}{|x_{{\ell}_i}-y|^2} + \sum_{i=1}^m\int_{S_i}dy(\hat a (y)+\hat a^{\ast} (y))g_i(y)\nonumber\\
=& \frac{\sqrt{\alpha}}{\pi^{3/2}}\sum_{i=1}^m\sum_{{\ell}_i\in C_i}\int_{S_i}dy\frac{\hat a_i (y)+\hat a_i^{\ast} (y)}{|x_{{\ell}_i}-y|^2} + \sum_{i=1}^m\int_{S_i}dy(\hat a_i (y)+\hat a_i^{\ast} (y))g_i(y)\nonumber
\\ &+2F_1+F_2,\label{interactionsplit}
\end{align}
where in the second step \eqref{agi} was used and
\begin{align}
F_2(x_1,\ldots x_N) = \frac{2\sqrt{\alpha}}{\pi^{3/2}}\sum_{i=1}^m\sum_{s_i\in C_i}\int_{S_i} dy\frac{g_i(y)}{|x_{s_i}-y|^2}.\label{errortermf2}
\end{align}
Inserting \eqref{hphonon} and \eqref{interactionsplit} in \eqref{hamiltonrearr}, we obtain that
\begin{align}
H^{(N)} = \sum_{i=1}^mK_i + U\sum_{i<j}\sum_{\substack{s_i\in C_i\\ {\ell}_j\in C_j}}\frac{1}{|x_{s_i}-x_{{\ell}_j}|}-(F_1+F_2),\label{hn-ki}
\end{align}
where
$$
K_i = \sum_{{\ell}_i\in C_i}\left(T_{{\ell}_i}-\frac{\sqrt{\alpha}}{\pi^{3/2}}\int_{S_i}dy\frac{\hat a_i(y)+\hat a_i^{\ast} (y)}{|x_{{\ell}_i}-y|^2}\right) +\int_{S_i}dy\hat a_i^{\ast} (y)\hat a_i (y)+U\sum_{\substack{s_i,{\ell}_i\in C_i\\ s_i < {\ell}_i}}\frac{1}{|x_{s_i}-x_{{\ell}_i}|}.
$$
Let $\psi \in \QQ $ be normalized and $\supp \psi\subset \bigtimes_{i=1}^m B_i^{n_i}$, then by \eqref{hn-ki}
\begin{align}
\form{\psi}{H^{(N)}}{\psi} &= \sum_{i=1}^m\form{\psi}{K_i}{\psi} + U\sum_{i<j}\sum_{\substack{s_i\in C_i\\ {\ell}_j\in C_j}}\sprod{\psi}{\frac{1}{|x_{s_i}-x_{{\ell}_j}|}\psi}-\sprod{\psi}{(F_1+F_2)\psi}.\label{proof:lastequ}
\end{align}
A bound for $\sprod{\psi}{(F_1+F_2)\psi}$ is derived in Lemma~\ref{lm:errorterms}. It remains to bound $\form{\psi}{K_i}{\psi}$. Because $L^2(\R^3) = \bigoplus_{i=1}^m L^2(S_i)$ the corresponding symmetric Fock space satisfies $\FF = \bigotimes_{i=1}^m \FF_{S_i}$, where $\FF_{S_i} := \FF (L^2(S_i))$. Since the precise form of $K_i$ is $1\otimes\ldots\otimes K_i\otimes\ldots\otimes 1$ on $\bigotimes_{i=1}^m L^2(\R^{3n_i})\otimes \FF_{S_i} = L^2(\R^{3N})\otimes \FF$ and since $\bigotimes_{i=1}^m \FF_{S_i,0}\otimes C_0^{\infty} (\R^{3n_i})$ is a form-core, the proof of the theorem is a consequence of Lemma~\ref{Kiabsch}.
\end{proof}

The key ingredients Lemma~\ref{lm:localization} and Proposition~\ref{prop:split}, together with the generalization of \cite{LT1997} from Section~\ref{sec:compsup} enables us to proof Theorem~\ref{mainthm}.

\begin{proof}[\textbf{Proof of Theorem~\ref{mainthm}.}]
Let $\psi\in\QQ$ be normalized. By Lemma~\ref{lm:localization} and Proposition~\ref{prop:split} there exists a constant $ C>0$
\begin{align}
\form{\psi}{H^{(N)}}{\psi}\ge \sum_{i=1}^m E_{n_i}(B_i) -C\alpha\frac{N^2}{R}-\frac{9N\pi^2}{4R^2}.\label{localandintercluster}
\end{align}
We choose $R=N^{-1}\alpha^{-19/23}$ and since we use scaled fields $A_{\alpha},V_{\alpha}$, by Corollary~\ref{cor:energyball} there exists a constant $c(A,V)$
\begin{align}
E_{n_i}(B_i)\ge \alpha^2 C_{n_i}(A,V,\nu,1) - c(A,V)\alpha^{42/23}n_i^3\left(1+\frac{n_i^2}{N^2}\right).\label{enibi:bound}
\end{align}
Statement a) of the theorem is then a consequence of \eqref{localandintercluster}, \eqref{enibi:bound}, of the fact $\sum_{i=1}^m n_i^q\le N^q$ for $q\ge 1$ and the energy inequality \eqref{generalenergyin}.
If the fields $A,V$ are not rescaled, an analogous calculation as above proves \eqref{eqn:thmb}, where Corollary~\ref{cor:energyball} and Lemma~\ref{limnotscaled} are used. 
\end{proof}

\begin{lemma}\label{Kiabsch}
Suppose $1\le i\le m$. Let $\psi_i\in L^2( B_i^{n_i})\otimes \FF_{S_i}$ be normalized. Then
$$
\form{\psi_i}{K_i}{\psi_i}\ge E_{n_i}(B_i).
$$
\end{lemma}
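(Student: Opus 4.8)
The plan is to recognize $K_i$ as a Fr\"ohlich Hamiltonian for the $n_i$ particles sitting in $B_i$, in which the phonon field has been (i) displaced by the $c$-number $g_i$ and (ii) restricted to the modes supported in $S_i$. First I would observe that for the purposes of this Lemma $g_i$ from \eqref{gi} is a fixed real element of $L^2(S_i)$: it depends only on the frozen positions $x_{\ell_j}$ with $j\ne i$, which do not occur in $\psi_i\in L^2(B_i^{n_i})\otimes\FF_{S_i}$, and it is square-integrable over $S_i$ because each point $x_{\ell_j}\in B_j$ lies at positive distance from $\overline{S_i}$ (so $|x_{\ell_j}-y|^{-2}$ has no singularity on $S_i$) while it decays like $|y|^{-2}$ at infinity. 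I would then undo the shift \eqref{agi} by a Weyl (displacement) operator $W_i=\exp\bigl(\int_{S_i}(g_i\hat a^{\ast}(y)-g_i\hat a(y))\,dy\bigr)$ on $\FF_{S_i}$, which satisfies $W_i^{\ast}\hat a(y)W_i=\hat a(y)+g_i(y)$ and hence $W_i^{\ast}\hat a_i(y)W_i=\hat a(y)$. Since $g_i$ is independent of the $x_{\ell_i}$ with $\ell_i\in C_i$, the operator $W_i$ acts as $1\otimes W_i$ on $L^2(B_i^{n_i})\otimes\FF_{S_i}$; it is unitary and preserves the support condition $\supp\subset B_i^{n_i}$.

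Setting $\tilde\psi_i:=W_i^{\ast}\psi_i$, I get $\form{\psi_i}{K_i}{\psi_i}=\form{\tilde\psi_i}{\tilde K_i}{\tilde\psi_i}$, where $\tilde K_i=W_i^{\ast}K_iW_i$ arises from $K_i$ by the substitution $\hat a_i\mapsto\hat a$; that is, $\tilde K_i$ is exactly the Fr\"ohlich Hamiltonian for $n_i$ particles (up to the coupling sign, treated below) but with every $\int_{\R^3}$ in \eqref{ft:operators} replaced by $\int_{S_i}$, i.e.\ with the phonon field confined to $S_i$. Next I would recover the full field by tensoring with the Fock vacuum: writing $\FF=\FF_{S_i}\otimes\FF_{S_i^c}$ and $\Phi:=\tilde\psi_i\otimes\Omega_{S_i^c}$, I would compute $\form{\Phi}{H^{(n_i)}}{\Phi}$ and check that all $S_i^c$-contributions drop out. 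Indeed the piece $\int_{S_i^c}\hat a^{\ast}\hat a$ of $H_{ph}$ annihilates $\Omega_{S_i^c}$, and the piece of $\phi(x_{\ell_i})$ coming from $\int_{S_i^c}$ factorizes as a particle multiplication operator tensored with $\hat a(y)+\hat a^{\ast}(y)$ on $\FF_{S_i^c}$, whose vacuum expectation vanishes. As the kinetic, potential and Coulomb terms act only on particle coordinates, this gives $\form{\Phi}{H^{(n_i)}}{\Phi}=\form{\tilde\psi_i}{\tilde K_i}{\tilde\psi_i}$.

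Since $\Phi$ is normalized with $\supp\Phi\subset B_i^{n_i}$, the definition of $E_{n_i}(B_i)$ then yields $\form{\tilde\psi_i}{\tilde K_i}{\tilde\psi_i}=\form{\Phi}{H^{(n_i)}}{\Phi}\ge E_{n_i}(B_i)$, which chains back to the assertion. The sign discrepancy between $+\sqrt\alpha\phi$ in \eqref{frhamiltonian} and the $-\sqrt\alpha\phi$ appearing in \eqref{hamiltonrearr} (hence in $K_i$ and $\tilde K_i$) is harmless: the unitary $\Gamma(-1)$ on $\FF$ sending $\hat a\mapsto-\hat a$ flips the sign of $\phi$, commutes with $H_{ph}$, and acts only on the phonons, so it leaves both the support condition and $E_{n_i}(B_i)$ invariant; I would invoke it to match conventions.

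The main obstacle is the rigorous, rather than merely formal, treatment of the unbounded displacement operator and the quadratic forms. One must justify that $W_i$ maps the form domain of $K_i$ onto that of $\tilde K_i$ and that the intertwining $W_i^{\ast}\hat a_iW_i=\hat a$ holds as an identity of forms on the stated core $\bigotimes_{i=1}^m\FF_{S_i,0}\otimes C_0^{\infty}(\R^{3n_i})$, and similarly that the vacuum cross-term cancellation (including the $\int_{S_i^c}$ integral in the coupling) is a valid form statement on this core. Once these form-theoretic points are nailed down, the remainder is bookkeeping.
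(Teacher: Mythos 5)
Your proof is correct and follows essentially the same route as the paper's: the paper likewise conjugates $H^{(n_i)}$ by the Weyl operator $W(g_i)=e^{\hat a^{\ast}(g_i)-\hat a(g_i)}$, identifies the result as $K_i$ plus terms supported on $S_i^c$ whose expectation in $\psi_i\otimes\Omega_{S_i^c}$ vanishes, and concludes from the variational definition of $E_{n_i}(B_i)$. Your additional remark about the coupling sign (resolved by the unitary $\Gamma(-1)$) correctly addresses a sign change between \eqref{frhamiltonian} and \eqref{hamiltonrearr} that the paper leaves implicit.
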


\begin{proof}
Let $\Omega_i$ be the normalized vacuum of $\FF_{S_i^c}$. 
Let $g_i$ be defined by \eqref{gi} and  $\hat a (g_i)$ by \eqref{agi}, then $W(g_i)=e^{\hat a^{\ast} (g_i) - \hat a (g_i)}$ is a unitary operator acting on $\FF$. Further it satisfies $W(g_i)\hat a(y) =\hat a_i(y)W(g_i)$, and therefore in the sense of quadratic forms
\begin{align}
W(g_i)H^{(n_i)}W(g_i)^{-1} = K_i + \int_{S_i^c}dy\left[\hat a^{\ast}(y) \hat a(y) - \frac{\sqrt{\alpha}}{\pi^{3/2}}\sum_{{\ell}_i\in C_i}\frac{\hat a(y)+\hat a^{\ast}(y)}{|x_{{\ell}_i}-y|^2}\right],\label{unitaryhni}
\end{align}
which follows by \eqref{hamiltonrearr}, \eqref{ft:operators}. By \eqref{unitaryhni}
\begin{align*}
E_{n_i}(B_i) &\le \form{\psi_i\otimes \Omega_i}{W(g_i)H^{(n_i)}W(g_i)^{-1}}{\psi_i\otimes \Omega_i}=\form{\psi_i}{K_i}{\psi_i}. 
\end{align*}
\end{proof}

\begin{lemma}\label{lm:errorterms}
Suppose $\psi$ satisfies the assumptions of Proposition~\ref{prop:split}. Then for $F_1$ and $F_2$ defined in \eqref{errortermf1} and \eqref{errortermf2}
\begin{enumerate}
\item[a)] $\sprod{\psi}{F_1\psi} \le N\frac{8\alpha}{\pi^2}\sum_{i=1}^m\frac{n_i}{d_i}\|\psi\|^2$.
\item[b)] $\sprod{\psi}{F_2\psi}  \le 2\alpha \sum_{i<j}\sum_{\substack{s_i\in C_i\\ {\ell}_j\in C_j}}\sprod{\psi}{\frac{1}{|x_{s_i}-x_{{\ell}_j}|}\psi}$.
\end{enumerate}
\end{lemma}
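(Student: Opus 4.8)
The plan is to treat both $F_1$ and $F_2$ as multiplication operators in the electron coordinates $x=(x_1,\dots,x_N)$, so that it suffices to establish the asserted bounds pointwise in $x$ on $\supp\psi$ and then integrate against $|\psi(x)|^2\ge0$. Two analytic facts drive the estimates: the radial integral $\int_{|y|\ge r}|y|^{-4}\,dy=8\pi/r$ for part a), and the convolution identity $\int_{\R^3}\frac{dy}{|a-y|^2|b-y|^2}=\frac{\pi^3}{|a-b|}$ for part b) (this is exactly the identity underlying \eqref{ft:operators}). The geometric input is that for $y\in S_i$ and $x_\ell\in B_j$ with $j\ne i$, the Voronoi property $\dist(B_i,y)\le\dist(B_j,y)$ and the triangle inequality give $|x_\ell-y|\ge\dist(B_j,y)\ge\tfrac12\dist(B_i,B_j)\ge\tfrac12 d_j$; in particular $\bigcup_{i\ne j}S_i\subset\{y:|x_\ell-y|\ge d_j/2\}$.

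For part a) I would first expand $\|g_i\|^2=\frac{\alpha}{\pi^3}\int_{S_i}\big(\sum_{\ell\notin C_i}|x_\ell-y|^{-2}\big)^2\,dy$ and apply Cauchy--Schwarz to the inner sum of $N-n_i$ nonnegative terms, replacing the square of the sum by $(N-n_i)\le N$ times the sum of the squares. The decisive move is the order of summation: rather than bounding each $\int_{S_i}|x_\ell-y|^{-4}\,dy$ in isolation, I would sum over the cells $i$ first for a fixed electron $\ell\in C_j$, so that disjointness of the $S_i$ together with the distance bound yields $\sum_{i\ne j}\int_{S_i}|x_\ell-y|^{-4}\,dy\le\int_{|x_\ell-y|\ge d_j/2}|x_\ell-y|^{-4}\,dy=8\pi/d_j$. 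Regrouping electrons by cluster via $\sum_\ell d_{j(\ell)}^{-1}=\sum_j n_j/d_j$ then gives $\sum_i\|g_i\|^2\le\frac{8\alpha N}{\pi^2}\sum_j n_j/d_j$, which is the claim.

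For part b) I would expand $F_2=\frac{2\alpha}{\pi^3}\sum_i\sum_{s_i\in C_i}\sum_{j\ne i}\sum_{\ell_j\in C_j}\int_{S_i}\frac{dy}{|x_{s_i}-y|^2|x_{\ell_j}-y|^2}$ and invoke the convolution identity. The subtlety, and the point where the naive estimate fails, is that bounding each summand separately by $\int_{\R^3}=\pi^3/|x_{s_i}-x_{\ell_j}|$ counts each unordered electron pair twice and yields the wrong constant $4\alpha$. Instead, for each $i<j$ and electrons $s\in C_i$, $t\in C_j$, I would group the ordered contribution $(i,j)$ (integrated over $S_i$) with its partner $(j,i)$ (integrated over $S_j$); by disjointness these combine into a single integral over $S_i\cup S_j\subset\R^3$, so that $\int_{S_i\cup S_j}\le\int_{\R^3}=\pi^3/|x_s-x_t|$ is applied only once, producing exactly $2\alpha\sum_{i<j}\sum_{s\in C_i,\,t\in C_j}|x_s-x_t|^{-1}$.

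The only genuine obstacle in either part is the bookkeeping needed to avoid the spurious factors, namely $(N-n_i)^2$ in a) and a factor two in b); both are resolved by the disjointness $S_i\cap S_j=\emptyset$ and by enlarging the domains of integration only after summing in the correct order. Once the pointwise inequalities hold, integrating against $|\psi|^2$ (and reinstating the factors $\sprod{\psi}{\frac{1}{|x_{s_i}-x_{\ell_j}|}\psi}$ in b)) yields the stated operator bounds.
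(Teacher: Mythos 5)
Your argument is correct and takes essentially the same route as the paper: in part a) the same Cauchy--Schwarz step with factor $N$, the same exchange of the $i$ and $j$ summations using disjointness of the $S_i$ (i.e.\ $\sum_{i\ne j}\chi_{S_i}=\chi_{S_j^c}$) and the radial bound $\int_{|y|\ge d_j/2}|y|^{-4}\,dy=8\pi/d_j$; in part b) the same pairing of the ordered $(i,j)$ and $(j,i)$ contributions, which the paper encodes as $\chi_{S_i}+\chi_{S_j}\le 1$, followed by a single application of the convolution identity. No gaps.
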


\begin{proof}
a) By Cauchy-Schwarz
\begin{align}
\left(\sum_{\substack{j=1\\ j\ne i}}^m\sum_{{\ell}_j\in C_j}\frac{1}{|x_{{\ell}_j}-y|^2}\right)^2 
\le N \sum_{\substack{j=1\\ j\ne i}}^m\sum_{{\ell}_j\in C_j}\frac{1}{|x_{{\ell}_j}-y|^4}.\label{f1:vorb}
\end{align}
By the definition of $F_1$ and \eqref{f1:vorb}
\begin{align}
F_1(x_1,\ldots x_N) &\le \frac{\alpha}{\pi^3}N\sum_{i=1}^m \sum_{\substack{j=1\\ j\ne i}}^m\sum_{{\ell}_j\in C_j}\int_{S_i}\frac{1}{|x_{{\ell}_j}-y|^4}dy\nonumber\\
&=\frac{\alpha}{\pi^3}N\sum_{j=1}^m \sum_{{\ell}_j\in C_j}\int_{S_j^c}\frac{1}{|x_{{\ell}_j}-y|^4}dy.\label{f1:ineq}
\end{align}
In the last step we exchanged the sums with respect to $i$ and $j$ and used that $\sum_{\substack{i=1\\ i\ne j}}^m \chi_{S_i} = \chi_{S_j^c}$. Since $x_{{\ell}_j} \in B_j$
\begin{align}
\int_{S_j^c}\frac{1}{|x_{{\ell}_j}-y|^4}dy \le \int_{B_{d_j/2}^c(0)}\frac{1}{|y|^4}dy = \frac{8\pi}{d_j}.\label{ineq:generalfact}
\end{align}
\eqref{ineq:generalfact} and \eqref{f1:ineq} now conclude a).\\

\noindent b) By the definition of $F_2$ and $g_i$
\begin{align*}
F_2(x_1,\ldots x_N) &= \frac{2\alpha}{\pi^3} \sum_{i<j}\int dy\left(\sum_{{\ell}_j\in C_j}\frac{1}{|x_{{\ell}_j}-y|^2}\right)\left(\sum_{s_i\in C_i}\frac{1}{|x_{s_i}-y|^2}\right) \left( \chi_{S_i}(y)+\chi_{S_j}(y)\right)\\
&\le \frac{2\alpha}{\pi^3} \sum_{i<j}\sum_{\substack{s_i\in C_i\\ {\ell}_j\in C_j}}\int dy\frac{1}{|x_{{\ell}_j}-y|^2}\frac{1}{|x_{s_i}-y|^2} \\
&= 2\alpha\sum_{i<j}\sum_{\substack{s_i\in C_i\\ {\ell}_j\in C_j}}\frac{1}{|x_{s_i}-x_{{\ell}_j}|}.
\end{align*}
In the second step $\chi_{S_i}(y)+\chi_{S_j}(y) \le 1$ for $i\ne j$ was used. The last equality follows by direct integration using cylindrical coordinates.
\end{proof}

\section{Compactly Supported Multipolarons}\label{sec:compsup}

The objective of this section is to bound the energy of compactly supported $N$-polarons in electromagnetic fields by the respective $N$-particle Pekar-Tomasevich functional from below. The Pekar-Tomasevich functional with external electric and magnetic fields for $N$ particles acting on $L^2(\R^{3N})$ is defined by
\begin{align}
\EE^{(N)}_{U,\alpha}(A,V,\ph) = \int_{\R^{3N}}dx\sum_{j=1}^N \left(|D_{A,x_j}\ph|^2+V(x_j)|\ph|^2\right)+U V_C(x_1,\ldots x_N)|\ph|^2 - \alpha D(\rho_{\ph}),\label{PT-functional}
\end{align}
with density
$$
\rho_{\ph}(x) = \sum_{j=1}^N\int_{\R^3} dx|\ph (x_1,\ldots, x_{j-1},x,x_{j+1},\ldots x_N)|^2 dx_1\ldots \widehat{dx_j}\ldots dx_N,
$$
and self-interaction term
\begin{align}
D(\rho) = \int_{\R^6}\frac{\rho(x)\rho(y)}{|x-y|}dxdy.
\end{align}
\eqref{PT-functional} coincides with the definition \eqref{PT-derivation}. This fact can be seen by choosing $\eta$ to be the coherent state that is determined by $a(k)\eta = -\overline{f(k)}\eta$ and $f(k) = 2\sqrt{\alpha\pi}\hat{\rho}_{\ph}(k)|k|^{-1}$, then $\eta$ minimizes $\xi\mapsto\form{\ph\otimes\xi}{H^{(N)}}{\ph\otimes\xi}$. The argument goes back to Pekar \cite{P1954}. 

This section is in principle based on \cite{LT1997}. For completeness reasons the main ideas of the proofs are performed nevertheless. Since we allow for general $A,V$ the translation invariance of the Fr\"ohlich, and the Pekar and Tomasevich model is abolished. The translation invariance seems to play some role in \cite{LT1997}, however after a little modification it is not necessary for our proof to work. In the case $N=1$ this issue was already noticed in \cite{GW2013}. 

Let $h\in L^2(\R^3)$, then $a(h):= \int dk h(k)a(k)$ is a well-defined Fock space operator. Suppose $\Omega\subset\R^3$ be a measurable set, then
$$
N_{\Omega} = \int_{\Omega}dk a^{\ast}(k) a(k).
$$
Let $h\in L^2(\R^3)$ be normalized and $\supp h\subset \Omega$, then in the sense of quadratic forms
\begin{align}
a^{\ast}(h)a(h)\le N_{\Omega}.\label{ineq:Fockspop}
\end{align}
Let $B_{\Lambda}:= B_{\Lambda}(0)$ denote the ball centered in the origin with arbitrary radius $\Lambda>0$. Suppose $\beta = 1-\frac{8\alpha N}{\pi\Lambda}$. For any positive integer $N$ we define
\begin{align}
H_{\Lambda}^{(N)} := \sum_{j=1}^N\left(\beta D_{A,x_j}^2+V(x_j) +\sqrt{\alpha}(a(f_{x_j}) + a^{\ast}(f_{x_j}))\right)+\beta UV_C(x_1,\ldots x_N)+N_{B_{\Lambda}},
\end{align}
where $f_x(k)=\chi_{B_{\Lambda}}(k)|k|^{-1}e^{-ikx}$.

\begin{lemma}\label{lm:cutoff}
For any values of $N$ and $\Lambda >0$ in the sense of quadratic forms on $\QQ$
\begin{align}
H^{(N)}\ge H_{\Lambda}^{(N)} -\frac{1}{2}.\label{ineq:cutoff}
\end{align}
$H^{(N)}$ can be interpreted as a selfadjoint Operator.
\end{lemma}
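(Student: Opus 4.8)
The plan is to compare $H^{(N)}$ and $H_\Lambda^{(N)}$ term by term after splitting the phonon momenta at $|k|=\Lambda$. Writing $H_{ph}=N_{B_\Lambda}+N_{B_\Lambda^c}$ with $N_{B_\Lambda^c}=\int_{|k|>\Lambda}a^*(k)a(k)\,dk$, and splitting each field operator $\phi(x_j)=\phi_\Lambda(x_j)+\phi^\Lambda(x_j)$ into its low- and high-momentum parts (the low part being, up to normalization, $a(f_{x_j})+a^*(f_{x_j})$), one finds in the sense of forms on $\QQ$
\begin{align*}
H^{(N)}-H_\Lambda^{(N)} = \frac{8\alpha N}{\pi\Lambda}\Big(\sum_{j=1}^N D_{A,x_j}^2+UV_C\Big)+N_{B_\Lambda^c}+\sqrt\alpha\sum_{j=1}^N\phi^\Lambda(x_j).
\end{align*}
Since $UV_C\ge 0$ and $1-\beta=8\alpha N/(\pi\Lambda)>0$, the Coulomb term may be dropped, and the whole lemma reduces to the single lower bound
\begin{align*}
\sqrt\alpha\sum_{j=1}^N\phi^\Lambda(x_j)\ge -\frac{8\alpha N}{\pi\Lambda}\sum_{j=1}^N D_{A,x_j}^2-N_{B_\Lambda^c}-\tfrac12 .
\end{align*}

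The main obstacle is that $\phi^\Lambda(x_j)=a(g_{x_j})+a^*(g_{x_j})$ has a form factor $g_{x}(k)\propto\chi_{\{|k|>\Lambda\}}|k|^{-1}e^{ikx}$ whose $L^2$-norm is infinite, so a naive completion of the square produces a divergent constant: this is the ultraviolet problem. I would resolve it by the Lieb--Yamazaki commutator trick, based on the identity $[D_{A,x},e^{ikx}]=k\,e^{ikx}$, which holds unchanged in the magnetic case because the multiplication operator $A$ drops out of the commutator. Componentwise this gives $e^{ikx}=|k|^{-2}\sum_\mu k_\mu\big(D_{A,x}^{(\mu)}e^{ikx}-e^{ikx}D_{A,x}^{(\mu)}\big)$ and hence the exact operator identity $a(g_x)=\sum_\mu\big(D_{A,x}^{(\mu)}a(\tilde g_x^{(\mu)})-a(\tilde g_x^{(\mu)})D_{A,x}^{(\mu)}\big)$ with regularized form factor $\tilde g_x^{(\mu)}(k)\propto \chi_{\{|k|>\Lambda\}}k_\mu|k|^{-3}e^{ikx}$. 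The decisive gain is that $\sum_\mu\|\tilde g_x^{(\mu)}\|^2\propto\int_{|k|>\Lambda}|k|^{-4}\,dk=4\pi/\Lambda<\infty$: one power of $|k|$ has been traded for a factor $D_{A,x}$.

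With this at hand I would estimate the form of $\sqrt\alpha\sum_j\phi^\Lambda(x_j)$ by Cauchy--Schwarz, pairing each $D_{A,x_j}^{(\mu)}$ against $\psi$ and using the standard bound $\|a(h)\psi\|\le\|h\|\,\sprod{\psi}{N_{B_\Lambda^c}\psi}^{1/2}$, which follows from \eqref{ineq:Fockspop} since $\tilde g_x^{(\mu)}$ is supported in $\{|k|>\Lambda\}$, together with $\|a^*(h)\psi\|^2\le\|h\|^2\sprod{\psi}{(N_{B_\Lambda^c}+1)\psi}$ for the creation part. Summing over $\mu$ and over the $N$ particles and applying Young's inequality then splits the bound into a multiple of $\sum_j\|D_{A,x_j}\psi\|^2=\sprod{\psi}{\sum_j D_{A,x_j}^2\psi}$ and a multiple of $\sprod{\psi}{(N_{B_\Lambda^c}+1)\psi}$. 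The factor $N$ in $1-\beta$ is forced here: all $N$ particles couple to the same field, so the shared $N_{B_\Lambda^c}$ must be absorbed with total coefficient $\le 1$, which costs an $N$-fold amount of kinetic energy. Choosing the Young parameter so that exactly $8\alpha N/(\pi\Lambda)$ units of kinetic energy and one unit of $N_{B_\Lambda^c}$ are consumed leaves precisely the constant $-\tfrac12$; this calibration is the reason for the specific coefficient $8/\pi$ in $\beta$.

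Finally, for the self-adjointness statement I would note that for $\Lambda$ large enough that $\beta>0$ the cutoff Hamiltonian $H_\Lambda^{(N)}$ is itself bounded below: its interaction has the square-integrable form factor $f_x$, with $\|f_x\|^2=\int_{|k|\le\Lambda}|k|^{-2}\,dk=4\pi\Lambda<\infty$, so completing the square in $N_{B_\Lambda}+\sqrt\alpha\sum_j(a(f_{x_j})+a^*(f_{x_j}))$ gives a finite lower bound, while $\sum_j V(x_j)$ is controlled by a fraction of $\beta\sum_j D_{A,x_j}^2$ through \eqref{Vformbound} and the diamagnetic inequality. Combined with \eqref{ineq:cutoff}, this shows the quadratic form of $H^{(N)}$ is semibounded on $\QQ$; it is then closable, and its closure is the form of a self-adjoint operator by the representation theorem.
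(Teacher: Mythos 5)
Your argument for the inequality \eqref{ineq:cutoff} is essentially the paper's own proof: the same splitting of $H_{ph}$ and of $\phi(x_j)$ at $|k|=\Lambda$, the same Lieb--Thomas/Lieb--Yamazaki commutator identity with $D_{A}$ (the paper writes the high-momentum part exactly as $\sum_{l}[D_{A,l},a(g_{l,x})]+c.c.$ with $g_{l,x}(k)\propto\chi_{B_\Lambda^c}(k)k_l|k|^{-3}e^{ikx}$), and the same calibration $\eps_2=8\alpha N/(\pi\Lambda)$ so that the $N_{B_\Lambda^c}$ coefficient is exactly $1$ and the residual constant is $1/2$. The only point to tighten is the self-adjointness claim: semiboundedness of a quadratic form does not by itself imply closability, so ``it is then closable'' needs justification. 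The paper instead observes that your very estimates show $\sqrt\alpha\sum_j\form{\psi}{\phi(x_j)}{\psi}$ is form-bounded relative to the closed form $\sum_j D_{A,x_j}^2+N$ with relative bound $8\alpha N/(\pi\Lambda)$, which can be made arbitrarily small by enlarging $\Lambda$; the KLMN theorem then yields closability and the self-adjoint operator in one stroke, with \eqref{Vformbound} handling the potential. Your estimates already contain everything needed for this, so the fix is one sentence, but as written that step is a gap.
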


\begin{proof}
Let $\psi\in\QQ$. The interaction term of the Fr\"ohlich hamilton operator is rewritten in terms of Fock space operators
\begin{align}
\form{\psi}{\phi(x)}{\psi} = \sprod{\psi}{(a(f_x)+\sum_{l=1}^3[D_{A,l},a(g_{l,x})])\psi} +c.c.,\label{eqn:interactionterm}
\end{align}
where $D_{A,l}$ denotes the $l$-th component of $D_A$ and
$$
g_{l,x}(k)=\frac{1}{\sqrt 2\pi}\chi_{B_{\Lambda}^c} (k) \frac{e^{ikx} k_l}{|k|^3}.
$$
For every $\eps_1, \eps_2>0$
\begin{align}
\sqrt{\alpha}|\sprod{\psi}{a(f_x)\psi}| &\le \frac{\eps_1}{2}\sprod{\psi}{N_{B_{\Lambda}}\psi} +\frac{\Lambda\alpha}{\pi\eps_1}\|\psi\|^2,\label{af-bound}\\
\sqrt{\alpha}\left|\sprod{\psi}{\sum_{l=1}^3[D_{A,l},a(g_{l,x})]\psi}\right| &\le  \frac{\eps_2}{2}\sprod{\psi}{D_A^2\psi}+\frac{4\alpha}{\eps_2\pi\Lambda}\sprod{\psi}{N_{B_{\Lambda}^c}\psi} + \frac{2\alpha}{\eps_2\pi\Lambda}\|\psi\|^2\label{ag-bound}.
\end{align}
\eqref{af-bound} is a consequence of the Cauchy-Schwarz inequality and \eqref{ineq:Fockspop}. For the proof of \eqref{ag-bound} see \cite{LT1997}.

Suppose $\eps_2 = 8\alpha N/ (\Lambda\pi )$, then \eqref{eqn:interactionterm} and \eqref{ag-bound} imply \eqref{ineq:cutoff}.

Let $\eps_1= N^{-1}\eps_2$ and $\Lambda = 8\alpha N/(\eps_2^2\pi)$, then by \eqref{af-bound} and \eqref{ag-bound} $\sqrt\alpha\sum_{j=1}^N\form{\psi}{\phi(x_j)}{\psi}$ is relatively form-bounded by $\sum_{i=1}^N D_{A,x_i}^2 +N$ with bound $\eps_2$. Hence there exists a corresponding self-adjoint operator with form core $\QQ$ (see \cite{RS2}), where in addition \eqref{Vformbound} is used.
\end{proof}

\begin{prop}\label{prop:liebthomas}
Suppose $N>0$ is an integer. Let $\psi\in \QQ$ be normalized and $\supp\psi\subset B_r(y)^N$ for any $r>0$ and $y\in\R^3$. Then for arbitrary $P, \Lambda>0$
\begin{align}
\form{\psi}{H^{(N)}}{\psi} \ge \beta C_N(A,\beta^{-1}V, U,\alpha\beta^{-2})-\frac{6N^2\alpha P^2 r^2\Lambda}{(1-\beta)\pi}-\frac{1}{2}-\left(2\frac{\Lambda}{P} +1\right)^3.\label{energyestimateball}
\end{align}
\end{prop}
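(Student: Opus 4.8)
The plan is to follow the operator-theoretic method of Lieb and Thomas \cite{LT1997}, with the modification needed for non-translation-invariant $A,V$. First I would apply Lemma~\ref{lm:cutoff} to replace $H^{(N)}$ by the regularized operator $H_\Lambda^{(N)}$ at the cost of the constant $-\tfrac12$; this confines the phonon momenta to $B_\Lambda$ and attaches the factor $\beta=1-\tfrac{8\alpha N}{\pi\Lambda}$ to the kinetic and Coulomb terms. Because $\supp\psi\subset B_r(y)^N$ with $y$ possibly far from the origin, the crucial first move is to conjugate the field by the unitary phase rotation $a(k)\mapsto e^{-iky}a(k)$. This leaves $N_{B_\Lambda}$, the operators $D_{A,x_j}$, the potential $V(x_j)$ and the Coulomb term untouched, while turning the coupling $f_{x_j}(k)=\chi_{B_\Lambda}(k)|k|^{-1}e^{-ikx_j}$ into one carrying the phase $e^{-iku_j}$ with $u_j=x_j-y$ and $|u_j|\le r$. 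This localization of the phase to the radius $r$ is exactly what renders the final estimate independent of the center $y$, and is the point at which the translation invariance used in \cite{LT1997} is circumvented.

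Next I would discretize the field: partition $B_\Lambda$ into $M=(2\Lambda/P+1)^3$ cubes $Q_\mu$ of side $P$ with centers $k_\mu$, and assign one normalized mode $b_\mu$ to each. Writing $N_{B_\Lambda}=\beta N_{B_\Lambda}+(1-\beta)N_{B_\Lambda}$, I keep $(1-\beta)N_{B_\Lambda}$ in reserve. Replacing $e^{-iku_j}$ by $e^{-ik_\mu u_j}$ inside each cube splits the interaction into a coarse-grained part built from the $b_\mu$ and a residual $\sqrt\alpha\sum_j(a(\xi_j)+a^*(\xi_j))$; since $|e^{-iku_j}-e^{-ik_\mu u_j}|\le|k-k_\mu|\,|u_j|\lesssim Pr$ and $\int_{B_\Lambda}|k|^{-2}\,dk=4\pi\Lambda$, one has $\|\sum_j\xi_j\|^2\lesssim N^2P^2r^2\Lambda$. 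Absorbing the residual into the reserved $(1-\beta)N_{B_\Lambda}$ via $a^*(\xi)a(\xi)\le\|\xi\|^2N_{B_\Lambda}$ (as in \eqref{ineq:Fockspop}) and AM--GM produces precisely the error $\tfrac{6N^2\alpha P^2r^2\Lambda}{(1-\beta)\pi}$.

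What remains is $\sum_j(\beta D_{A,x_j}^2+V(x_j))+\beta UV_C$ together with the $M$-mode field Hamiltonian $\beta\sum_\mu b_\mu^*b_\mu+\sqrt\alpha\sum_\mu(\bar G_\mu b_\mu+G_\mu b_\mu^*)$, where $G_\mu=c_\mu\sum_j e^{-ik_\mu u_j}$ and $c_\mu^2=\int_{Q_\mu}|k|^{-2}\,dk$. The heart of the matter is to pass from this quantized field to the classical Pekar self-energy. Completing the square naively would give $-\alpha\beta^{-1}\sum_\mu|G_\mu|^2$, whose expectation contains $\langle|\sum_j e^{-ik_\mu u_j}|^2\rangle$ rather than $|\langle\sum_j e^{-ik_\mu u_j}\rangle|^2$; the difference is the electron--phonon correlation, which already for $N=1$ makes the naive bound diverge like $\Lambda$. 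Instead I would use a coherent-state (lower-symbol) bound of Berezin--Lieb type for the $M$ modes: a single amplitude per mode couples only to the averaged source $\langle\ph, G_\mu\ph\rangle=c_\mu\hat\rho_\ph(k_\mu)$, so the field part is bounded below by $-\alpha\beta^{-1}\sum_\mu|\langle\ph,G_\mu\ph\rangle|^2-M$, at the price of one quantization unit per mode, i.e. the term $-(2\Lambda/P+1)^3$. That this mean-field minimum reproduces $-\alpha\beta^{-1}D_\Lambda(\rho_\ph)$ is exactly the coherent-state computation behind \eqref{PT-derivation}, and concavity of $-D$ in the density lets one restrict to pure $\ph$.

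Finally, $-\alpha\beta^{-1}D_\Lambda(\rho_\ph)\ge-\alpha\beta^{-1}D(\rho_\ph)$ since the cutoff only removes mass, and the surviving functional $\int\sum_j(\beta|D_{A,x_j}\ph|^2+V|\ph|^2)+\beta UV_C|\ph|^2-\alpha\beta^{-1}D(\rho_\ph)$ equals $\beta\,\EE^{(N)}_{U,\alpha\beta^{-2}}(A,\beta^{-1}V,\ph)$. Taking the infimum over normalized $\ph$ gives $\beta C_N(A,\beta^{-1}V,U,\alpha\beta^{-2})$, and adding the three collected errors yields \eqref{energyestimateball}. I expect the Berezin--Lieb step to be the main obstacle: making rigorous that an entangled electron--phonon state cannot beat the mean-field Pekar energy by more than the number of modes $M$, which is precisely what separates the correct Pekar self-energy from the divergent naive one. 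The gauge reduction and the coarse-graining estimate are, by comparison, routine once the phase has been localized to $|u_j|\le r$.
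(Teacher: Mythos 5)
Your proposal follows essentially the same Lieb--Thomas route as the paper: the cutoff Lemma~\ref{lm:cutoff} with cost $-\tfrac12$, block modes on a partition of $B_{\Lambda}$ into cubes of side $P$ weighted by $|k|^{-1}$, absorption of the coarse-graining error into a reserved fraction $(1-\beta)$ of the number operator, and a coherent-state (lower-symbol) bound costing one unit per mode, i.e.\ $-(2\Lambda/P+1)^3$. Your treatment of the center $y$ by conjugating the field with $a(k)\mapsto e^{-iky}a(k)$ is a legitimate alternative to the paper's device of translating the potentials and using that $C_N$ is invariant under $A\mapsto A(\cdot-y)$, $V\mapsto V(\cdot-y)$; both localize the relevant phases to $|u_j|\le r$. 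Your identification of the surviving functional with $\beta\,\EE^{(N)}_{U,\alpha\beta^{-2}}(A,\beta^{-1}V,\cdot)$ is also exactly the paper's.

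The one step you wave through is the passage from the discrete self-energy $\sum_{\mu}c_{\mu}^2|\hat\rho(k_{\mu})|^2$ to the integral $\int_{B_{\Lambda}}|k|^{-2}|\hat\rho(k)|^2\,dk$. With $k_{\mu}$ \emph{fixed} as the cube centers this Riemann sum is not bounded by the integral; comparing them costs $||\hat\rho(k_{\mu})|^2-|\hat\rho(k)|^2|\lesssim N\cdot NPr$, i.e.\ an extra error of order $\alpha N^2Pr\Lambda/(1-\beta)$, which is only \emph{linear} in $Pr$ and hence dominates the stated $P^2r^2$ error for small $Pr$ (or, with a $(1+\eps)$-splitting, degrades the coefficient of $D(\rho)$ and hence the effective $\beta$). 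The paper avoids this entirely by leaving the representative points $k_n\in B(n)$ free: the coarse-graining bound \eqref{ineq:block} is proved uniformly in $\{k_n\}$ (hence the $\sup_{\{k_n\}}$), and the $k_n$ are chosen \emph{a posteriori} so that $\inf_{k_n}\int dz\,|\hat\rho_z(k_n)|^2$ is at most the $|k|^{-2}$-weighted average over $B(n)$, which is \eqref{ineq:knk}. Note also that $\hat\rho_z$ depends on $z$ through the coherent-state projection $\psi_z$, so this optimization must be performed after the $z$-integration, and the final step needs the Pekar infimum over the vector-valued states $\psi_z\in L^2(\R^{3N})\otimes\FF(M^{\perp})$ to still equal $C_N$ --- this follows from positive-definiteness of the quadratic form $D$ (Cauchy--Schwarz), not, as you suggest, from concavity of $-D$. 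These points are repairable, but they are precisely where the explicit constants in \eqref{energyestimateball} are earned.
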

\begin{proof}
Since $C_N$ is constant with respect to translations of the potentials, i.e. $A(.-y)$ and $V(.-y)$, we may assume that $y=0$. By Lemma~\ref{lm:cutoff} the left-hand side of \eqref{energyestimateball} is estimated from below by $\sprod{\psi}{H^{(N)}_{\Lambda}\psi}$ with error $\frac{1}{2}$.

In the next step the modes are replaced by the so called block modes, of which only finitely many exist. For a given $P>0$, we define
\begin{align*}
B(n) &:=  \{k\in B_{\Lambda}|k_i-n_iP|\le P/2\},\quad n\in \Z^3,\\
\Lambda_P &:= \{n\in\Z^3|B(n)\ne \emptyset\}.
\end{align*}
In every $B(n)$ an arbitrary $k_n$ is chosen, they are specified later. The block modes are defined by
$$
a_n:=\frac{1}{M_n}\int_{B(n)}\frac{dk}{|k|}a(k),\quad M_n=\left(\int_{B(n)}\frac{dk}{|k|^2}\right)^{1/2}.
$$
They are well-defined normalized annihilation operators acting on the Fock space $\FF$. For random $\delta >0$
\begin{align}
H_{block}^{(N)} =& \sum_{j=1}^N\left(\beta D_{A,x_j}^2+V(x_j)+\frac{\sqrt{\alpha}}{\sqrt 2\pi}\sum_{n\in\Lambda_P}M_n\left(e^{ik_nx_j}a_n+e^{-ik_nx_j}a_n^{\ast}\right)\right)\nonumber\\
+&\beta U V_C(x_1,\ldots x_N)+(1-\delta) N_{block},
\end{align}
where $N_{block} = \sum_{n\in\Lambda_P} a_n^{\ast}a_n$. 

Next we show
\begin{align}
\sprod{\psi}{H^{(N)}_{\Lambda}\psi}\ge\inf_{\substack{\tilde\psi\in\QQ\\ \|\tilde\psi\|=1}}\sup_{\{k_n\}}\sprod{\tilde\psi}{H_{block}^{(N)}\tilde\psi}-\frac{6N^2\alpha P^2 r^2\Lambda}{\delta\pi}  .\label{ineq:block}
\end{align}
Let $1\le j\le N$, then in the sense of quadratic forms 
\begin{align}
&\frac{\delta}{N} N_{B(n)} +\frac{\sqrt \alpha}{\sqrt 2 \pi}\int_{B(n)} \frac{dk}{|k|}\left( (e^{ikx_j}-e^{ik_nx_j})a(k)+(e^{-ikx_j}-e^{-ik_nx_j})a^{\ast}(k)\right) \nonumber\\
\ge &-\frac{N\alpha}{2\pi^2\delta}\int_{B(n)}dk\frac{|e^{ikx_j}-e^{ik_nx_j}|^2}{|k|^2},\label{ineq:knquaderg}
\end{align}
which follows by completion of squares. Let $k\in B(n)$ and $|x_j|<r$, $1\le j\le N$
\begin{align}
|e^{ikx_j}-e^{ik_nx_j}|^2\le 3P^2r^2,\quad 1\le j\le N\label{ineq:eikneik}.
\end{align}
\eqref{ineq:block} is a consequence of \eqref{ineq:knquaderg} summed over all $n\in \Lambda_P$, $\sum_{n\in\Lambda_P} a_n^{\ast}a_n \le N_{B_{\Lambda}}$ and \eqref{ineq:eikneik}. 

It remains to prove that for all normalized $\psi\in\QQ$
\begin{align}
\sup_{\{k_n\}}\sprod{\psi}{H_{block}^{(N)}\psi}\ge \beta C_N(A,\beta^{-1}V, U, \alpha\beta^{-2}) - |\Lambda_P|.
\end{align}
To do so, the block operators $a_n$ are replaced by complex numbers $z_n$ using coherent states. The closed subspace $M:=\text{span}\{\chi_{B(n)}|.|^{-1}|n\in \Lambda_P\}\subset L^2(\R^3)$ generates the symmetric Fock space $\FF (M)$, i.e. the Fock space that is constructed by the block operators $a_n^{\ast}$, $n\in\Lambda_P$. Since $M$ is a closed subspace
\begin{align*}
\FF=\FF(M\oplus M^{\perp})\cong \FF(M)\otimes\FF(M^{\perp}).
\end{align*}
Suppose $z=(z_n)_{n\in \Lambda_P}$, $z_n\in\C$, then we define normalized coherent states $\eta_z\in \FF (M)$ by
\begin{align}
\eta_z:=\prod_{n\in\Lambda_P}e^{z_na_n^{\ast}-\overline{z_n}a_n}\Omega,\label{def:cohrentstate}
\end{align}
where $\Omega\in\FF (M)$ denotes the normalized vacuum. From \eqref{def:cohrentstate} $a_n\eta_z =z_n\eta_z$. If $\psi\in \QQ$ normalized and $\psi_z = \sprod{\eta_z}{\psi}$, note that the inner product acts on $\FF (M)$, then $\psi_z\in L^2(\R^{3N})\otimes\FF(M^{\perp})$. For notational simplicity hereinafter the inner products are not labeled explicitly. By a short calculation in the sense of weak integrals on the Fock space $\FF (M)$ for $dz= \prod_{n\in\Lambda_P}\frac{1}{\pi}\int dx_ndy_n$
\begin{align}
&\int dz\sprod{.}{\eta_z}\eta_z =1, &&\int dz z_n\sprod{.}{\eta_z}\eta_z =a_n,\nonumber\\
&\int dz(|z_n|^2-1)\sprod{.}{\eta_z}\eta_z =a_n^{\ast}a_n, && \int dz \overline{z_n}\sprod{.}{\eta_z}\eta_z =a_n^{\ast},\label{identities}
\end{align}
where the last equality follows from the first one and the fact $[a_n,a_n^{\ast}]=1$ for all $n\in\Lambda_P$. Let the block modes be replaced by the identities \eqref{identities}, then
\begin{align}
\sprod{\psi}{H_{block}^{(N)}\psi} = \int dz\sprod{\psi_z}{h_z\otimes 1\psi_z},\label{int:hz}
\end{align}
whereas $h_z$ is a Schr\"odinger operator on $L^2(\R^{3N})$
\begin{align*}
h_z &=\sum_{j=1}^N (\beta D_{A,x_j}^2+V(x_j)) + \beta U V_C(x_1,\ldots x_N)+ (1-\delta)\sum_{n\in\Lambda_P}(|z_n|^2-1)\\
&+\frac{\sqrt{\alpha}}{\sqrt 2\pi}\sum_{j=1}^N\sum_{n\in\Lambda_P}M_n\left(z_n e^{ik_nx_j}+\overline{z_n}e^{-ik_nx_j} \right).
\end{align*}
Since $\rho_z(x) :=\sum_{j=1}^N\int_{\R^{3(N-1)}}|\psi_z(x_1,\ldots,x_{j-1},x,x_{j+1},\ldots ,x_N)|^2dx_1\ldots \widehat{dx_j}\ldots dx_N$, then
$$
\sum_{j=1}^N \sprod{\psi_z}{e^{-ikx_j}\psi_z} = (2\pi)^{3/2}\hat{\rho}_z (k).
$$
Obviously
\begin{align}
\inf_{\{k_n\}}\int dz |\hat{\rho}_z(k_n)|^2\|\psi_z\|^2 \le \int dz |\hat{\rho}_z(k)|^2\|\psi_z\|^2,\quad\forall k\in B(n).\label{ineq:knk}
\end{align}
By completion of squares of \eqref{int:hz} with respect to $z_n$ and \eqref{ineq:knk}
\begin{align*}
&\sup_{\{k_n\}}\int dz\sprod{\psi_z}{h_z\psi_z}\\
\ge &\int dz\sprod{\psi_z}{\left[\sum_{j=1}^N(\beta D_{A,x_j}^2+ V(x_j))+ \beta U V_C(x_1,\ldots x_N)\right]\psi_z}\\
&-\frac{4\pi\alpha}{(1-\delta)}\int dz\int_{B_{\Lambda}}dk\frac{|\hat{\rho}_z (k)|^2}{\|\psi_z\|^2|k|^2}-|\Lambda_P|\\
\ge &\int dz\sprod{\psi_z}{\left[\sum_{j=1}^N(\beta D_{A,x_j}^2+ V(x_j)) +\beta U V_C(x_1,\ldots x_N)\right]\psi_z}\\
&-\frac{\alpha}{(1-\delta)}\int dz\int\frac{\rho_z(x)\rho_z(y)}{\|\psi_z\|^2|x-y|}dxdy-|\Lambda_P|.
\end{align*}

The integrand is estimated from below by
$$
\beta \|\psi_z\|^2C_N(A,\beta^{-1}V, U,\alpha\beta^{-2}),
$$
where $\delta = 1-\beta$ has been chosen. The assumption follows by $\int\|\psi_z\|^2dz =1$ and $|\Lambda_P|\le\left(2\frac{\Lambda}{P}+1\right)^3$.
\end{proof}

Next we evaluate \eqref{energyestimateball} on a single localized $n$-polaron found in Lemma~\ref{lm:localization}. The constants $\Lambda$ and $P$ can be chosen freely, but the radius $r$ of the corresponding ball is determined by Lemma~\ref{lm:localization}~(ii), i.e. $r=\frac{1}{2} (3n-1)R$ for any $R>0$ fixed.

\begin{corollary}\label{cor:energyball}
Let $\nu>0$ be arbitrary and let $n>0$ be any integer. Let $R>0$ and let $B$ be a ball of radius $\frac{1}{2} (3n-1)R$. Suppose $A,V$ satisfy (AV1) and \eqref{generalenergyin}, and let them be scaled by $A_{\alpha}(x) = \alpha A(\alpha x)$, $V_{\alpha}(x) = \alpha^2 V(\alpha x)$. Then there exists $c(A,V)$
\begin{align}
E_n(B)\ge \alpha^2 C_n(A,V,\nu,1)- 3 R^2 \alpha^{80/23}n^5 -c(A,V)\alpha^{42/23}n^3. \label{kor:enestimatecn}
\end{align}
Moreover, if $A,V$ satisfy (AV2), then $c(A_{\alpha^{-1}},V_{\alpha^{-1}})$ is uniformly bounded for $\alpha$ large.
\end{corollary}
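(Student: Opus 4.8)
The plan is to apply Proposition~\ref{prop:liebthomas} to a normalized $\ph$ supported in $B^n$, with $N=n$, coupling constant $\alpha$, $U=\alpha\nu$, the rescaled fields $A_\alpha,V_\alpha$, and the radius $r=\tfrac12(3n-1)R$ dictated by Lemma~\ref{lm:localization}(ii). With $\beta=1-\tfrac{8\alpha n}{\pi\Lambda}$ this gives
\[
E_n(B)\ge \beta\,C_n(A_\alpha,\beta^{-1}V_\alpha,\alpha\nu,\alpha\beta^{-2})-\frac{6n^2\alpha P^2r^2\Lambda}{(1-\beta)\pi}-\frac12-\Big(2\tfrac{\Lambda}{P}+1\Big)^3,
\]
so the task splits into (i) converting the Pekar energy on the right into $\alpha^2C_n(A,V,\nu,1)$ and (ii) optimizing over the free parameters $\Lambda,P$.

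For (i) I first undo the field rescaling. Inserting the dilation $\psi\mapsto\psi_\alpha$, $\psi_\alpha(x)=\alpha^{3n/2}\psi(\alpha x)$, into \eqref{PT-functional} and tracking each term exactly as in the derivation of \eqref{scalingproperty} --- the kinetic, external and Coulomb terms acquire a factor $\alpha^2$ while $D(\rho_{\psi_\alpha})=\alpha\,D(\rho_\psi)$ --- yields the exact identity
\[
\beta\,C_n(A_\alpha,\beta^{-1}V_\alpha,\alpha\nu,\alpha\beta^{-2})=\alpha^2\,\beta\,C_n(A,\beta^{-1}V,\nu,\beta^{-2}).
\]
Writing $\beta\,\EE^{(n)}_{\nu,\beta^{-2}}(A,\beta^{-1}V,\psi)=\beta T_\psi+P_\psi+\beta\nu W_\psi-\beta^{-1}D_\psi$ with $T_\psi=\int\sum_j|D_{A,x_j}\psi|^2$, $P_\psi=\int\sum_jV(x_j)|\psi|^2$, $W_\psi=\int V_C|\psi|^2$, $D_\psi=D(\rho_\psi)$, and comparing with $\EE^{(n)}_{\nu,1}(A,V,\psi)=T_\psi+P_\psi+\nu W_\psi-D_\psi$, the discrepancy is $-(1-\beta)(T_\psi+\nu W_\psi)-(\beta^{-1}-1)D_\psi\le 0$.

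The core of the argument is to show this discrepancy costs at most $c(A,V)(1-\beta)n^3$ uniformly in $\psi$, so that $\beta\,C_n(A,\beta^{-1}V,\nu,\beta^{-2})\ge C_n(A,V,\nu,1)-c(A,V)(1-\beta)n^3$. I restrict to $\psi$ with $\beta\,\EE^{(n)}_{\nu,\beta^{-2}}(A,\beta^{-1}V,\psi)\le C_n(A,V,\nu,1)$ (otherwise the bound is automatic) and establish a priori bounds $T_\psi,\nu W_\psi,D_\psi\le c(A,V)n^3$. The decisive estimate is
\[
D_\psi\le c\,\|\rho_\psi\|_{6/5}^2\le c\,n^{3/2}T_\psi^{1/2},
\]
from Hardy--Littlewood--Sobolev, interpolation between $\|\rho_\psi\|_1=n$ and $\|\rho_\psi\|_3=\|\sqrt{\rho_\psi}\|_6^2\le S\,T_\psi$, and the Hoffmann--Ostenhof plus diamagnetic inequalities $\int|\nabla\sqrt{\rho_\psi}|^2\le T_\psi$. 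Combining this with the zero-form-bound \eqref{Vformbound} of $V$ (so $|P_\psi|\le\tfrac12T_\psi+nC_{1/2}$), with $|C_n(A,V,\nu,1)|\le c(A,V)n^3$ (itself a consequence of the same bounds and a product trial state), and with completion of the square in $T_\psi^{1/2}$, delivers both the a priori bounds and the claimed lower bound, where $c(A,V)$ enters only through $C_{1/2}$ and $C_n$. This is the step I expect to be the main obstacle, since it produces the cubic $n$-dependence and must proceed without assuming the existence of a Pekar minimizer.

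Finally, for (ii) I substitute $1-\beta=\tfrac{8\alpha n}{\pi\Lambda}$ and $r=\tfrac12(3n-1)R$ and collect the three non-scaling errors: the first equals $\tfrac34nP^2r^2\Lambda^2\le 3R^2\alpha^{80/23}n^5$ once $P,\Lambda$ are fixed, while the conversion error $\alpha^2c(A,V)(1-\beta)n^3=\tfrac{8c(A,V)}{\pi}\alpha^3n^4\Lambda^{-1}$ and the term $(2\Lambda/P+1)^3$ are balanced by choosing $\Lambda=\alpha^{27/23}n$ and $P=\alpha^{13/23}$, which makes both of order $\alpha^{42/23}n^3$; this is precisely \eqref{kor:enestimatecn}. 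For the concluding assertion I would observe that under (AV2) the norms $\|A_{\alpha^{-1}}\|_{L^3}=\|A\|_{L^3}$ and $\|V_{\alpha^{-1}}\|_{L^{3/2}}=\|V\|_{L^{3/2}}$ are scale-invariant, so the Sobolev/H\"older-controlled quantities $C_{1/2}$ and $C_n$ governing $c(A_{\alpha^{-1}},V_{\alpha^{-1}})$ remain bounded as $\alpha\to\infty$.
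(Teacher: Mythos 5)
Your argument is correct and follows the paper's architecture exactly: Proposition~\ref{prop:liebthomas} with $r=\tfrac12(3n-1)R$, the scaling identity reducing $\beta C_n(A_\alpha,\beta^{-1}V_\alpha,\alpha\nu,\alpha\beta^{-2})$ to $\alpha^2\beta C_n(A,\beta^{-1}V,\nu,\beta^{-2})$, and the same parameter choices $\Lambda=n\alpha^{27/23}$, $P=\alpha^{13/23}$, $1-\beta=\tfrac{8}{\pi}\alpha^{-4/23}$. The one step you handle differently is the conversion $\beta C_n(A,\beta^{-1}V,\nu,\beta^{-2})\ge C_n(A,V,\nu,1)-c(A,V)(1-\beta)n^3$: the paper exploits concavity of $\lambda\mapsto C_n(A,\lambda V,\nu,\lambda^2)$ and then bounds the one-sided derivative via the global estimate $c(A,V)N\ge C_N\ge -c(A,V)N^3$ of Lemma~\ref{lm:ndependence}, whereas you perturb the functional directly and control $T_\psi$, $W_\psi$, $D_\psi$ a priori on near-minimizers. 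Both routes rest on the same quantitative input $D(\rho_\psi)\lesssim n^{3/2}T_\psi^{1/2}$ (the paper's \eqref{Dhardy}, via Hardy and the diamagnetic inequality; yours via HLS, interpolation and Hoffmann--Ostenhof), and both yield the same cubic $n$-dependence; the concavity trick is slightly slicker because it outsources all the analysis to the already-proved Lemma~\ref{lm:ndependence}, while your version is more self-contained but must be careful to restrict to the sublevel set where the a priori bounds hold, as you correctly do. One caveat: for the final ``moreover'' assertion your appeal to scale-invariance of the \emph{global} norms $\|A\|_{L^3}$, $\|V\|_{L^{3/2}}$ does not quite fit (AV2), which only assumes local integrability; the paper instead obtains uniform boundedness of $c(A_{\alpha^{-1}},V_{\alpha^{-1}})$ from the explicit $\alpha^{-2}$ gain in \eqref{simicalc} together with Lemmas~\ref{lm:weakfieldlimit1} and~\ref{limnotscaled}, and you should argue along those lines.
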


\begin{proof}
Since $\lambda\mapsto C_n(A,\lambda V, \nu, \lambda^2)$ is concave, the one-sided derivatives exist and
\begin{align}
C_n(A,\beta^{-1}V,\nu,\beta^{-2}) \ge C_n(A,V,\nu,1) + (\beta^{-1}-1)\frac{d}{d\lambda}C_n(A,\lambda V,\nu,\lambda^2)\Big|_{\lambda =2-} .\label{concavedev}
\end{align}
The derivation term of \eqref{concavedev} is estimated by Lemma~\ref{lm:ndependence}. The statement is then a consequence of Proposition~\ref{prop:liebthomas}, the scaling property \eqref{scalingproperty}, \eqref{concavedev} and Lemma~\ref{lm:ndependence}, where we determine the free parameters $\Lambda =n\alpha^{27/23}$, $P= \alpha^{13/23}$ and hence $\left(2\frac{\Lambda}{P}+1\right)^3\le 9 n^3\alpha^{42/23}$ and $1-\beta =\frac{8}{\pi} \alpha^{-4/23}$.

If the fields $A,V$ are not rescaled, then in \eqref{concavedev} $A,V$ are replaced by $A_{\alpha^{-1}}, V_{\alpha^{-1}}$. Hence the second statement follows by Lemma~\ref{lm:ndependence}.
\end{proof}

\section{Appendix}

\begin{lemma}\label{lm:ndependence}
For any values of $N$ and $\nu >0$. Suppose (AV1) and \eqref{generalenergyin} are satisfied, then there exists $c(A,V)$
\begin{align}
 c (A,V) N\ge C_N(A,V,\nu,1) \ge - c(A,V) N^3,\quad N\in\N.\label{ptminimalesti}
\end{align}
Moreover, if (AV2) is satisfied, then there exists a constant $c>0$ such that $c(A_{\alpha^{-1}},V_ {\alpha^{-1}})\le c$ for $\alpha$ large enough.
\end{lemma}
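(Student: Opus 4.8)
The plan is to establish the two-sided bound on $C_N(A,V,\nu,1)$ by producing, respectively, a good trial state for the upper bound and a quadratic-form lower bound for the infimum. Recall from \eqref{PT-functional} that
$$
\EE^{(N)}_{\nu,1}(A,V,\ph) = \int_{\R^{3N}}dx\sum_{j=1}^N\left(|D_{A,x_j}\ph|^2+V(x_j)|\ph|^2\right)+\nu V_C|\ph|^2 - D(\rho_{\ph}),
$$
so $C_N$ is the infimum of this over normalized $\ph$. For the upper bound $C_N\le c(A,V)N$ I would choose a product trial function $\ph = \prod_{j=1}^N u(x_j)$ with a single fixed normalized $u\in C_0^\infty(\R^3)$. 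The kinetic and potential terms then contribute $N\bigl(\|D_A u\|^2+\sprod{u}{Vu}\bigr)$, which by the form-boundedness \eqref{Vformbound} is $\le cN$; the Coulomb term $\nu V_C|\ph|^2\ge 0$ only helps for a lower bound, but for the \emph{upper} bound one may drop the favorable self-interaction $-D(\rho_\ph)\le 0$ and bound the repulsion $\nu\binom{N}{2}D(|u|^2)\le cN^2$, which would give $c N^2$ rather than the claimed linear $cN$. To recover linearity I would instead invoke \eqref{generalenergyin}: since $C_{n}+C_m\ge C_{m+n}$, iterating gives $C_N\le N C_1(A,V,\nu,1)$, and $C_1\le c(A,V)$ follows from the single-particle trial function. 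This sidesteps the Coulomb growth entirely.

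For the lower bound $C_N\ge -c(A,V)N^3$ I would discard the nonnegative terms $V_C\ge 0$ and use the diamagnetic inequality $|\nabla|\ph||\le|D_A\ph|$ together with \eqref{Vformbound} to absorb the potential, reducing matters to bounding $\sum_j\int|\nabla_{x_j}\ph|^2 - D(\rho_\ph)$ from below. The difficulty is that $\alpha=1$ here and the self-interaction $D(\rho_\ph)$ is attractive; I would control it by the standard estimate $D(\rho)\le C\|\rho\|_{6/5}^2$ (Hardy–Littlewood–Sobolev) and then $\|\rho\|_{6/5}\lesssim \|\rho\|_1^{1/2}\|\rho\|_3^{1/2}$ with $\|\rho_\ph\|_1=N$ and $\|\rho_\ph\|_3$ controlled by the gradient term via Sobolev. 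Tracking the powers of $N$ through Hölder and a Young-type splitting $\eps\,(\text{kinetic})+C_\eps N^3$ produces the cubic loss; the $N^3$ is exactly the count one expects from bounding $N$-body density norms by one-body Sobolev inequalities. The main obstacle is the bookkeeping of the $N$-dependence: ensuring that after the diamagnetic and HLS steps the coefficient of the kinetic energy can be chosen small uniformly while the residual constant scales no worse than $N^3$.

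Finally, for the scaled assertion under (AV2), note $A_{\alpha^{-1}}(x)=\alpha^{-1}A(\alpha^{-1}x)$ and $V_{\alpha^{-1}}(x)=\alpha^{-2}V(\alpha^{-1}x)$. The point of (AV2), with $A_k\in L^3_{loc}$ and $V\in L^{3/2}_{loc}$, is that the relevant norms enter the form bound \eqref{Vformbound} through scale-critical Sobolev embeddings in $\R^3$, so that the constants $C_\eps$ in \eqref{Vformbound} for the rescaled potentials stay bounded as $\alpha\to\infty$; indeed the $L^3$ and $L^{3/2}$ norms are invariant under the combined spatial rescaling and the $\alpha^{-1}$, $\alpha^{-2}$ prefactors. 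I would therefore repeat the two estimates above with $A,V$ replaced by $A_{\alpha^{-1}},V_{\alpha^{-1}}$, verify that the only $\alpha$-dependence enters through these scale-invariant norms, and conclude $c(A_{\alpha^{-1}},V_{\alpha^{-1}})\le c$ uniformly for large $\alpha$.
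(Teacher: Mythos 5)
Your overall architecture matches the paper's: the upper bound via iterating the subadditivity \eqref{generalenergyin} to get $C_N\le N C_1$ (the paper does exactly this, see \eqref{energysubadd}), the lower bound by dropping $V_C\ge 0$, using the diamagnetic inequality and \eqref{Vformbound}, and absorbing the self-interaction into the kinetic energy at the price of a residual $O(N^3)$, and the uniformity claim via the scale invariance of the $L^3$/$L^{3/2}$ norms under $A\mapsto A_{\alpha^{-1}}$, $V\mapsto V_{\alpha^{-1}}$ (the paper packages this as Lemma~\ref{lm:weakfieldlimit1} and Lemma~\ref{limnotscaled}, and gets the lower bound directly from the rescaled form bound, which yields the error $C_\eps\alpha^{-2}N$). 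The one genuine divergence is the mechanism for controlling $D(\rho_\ph)$: the paper uses the Hardy inequality pointwise in one variable, giving the clean bound \eqref{Dhardy}, $D(\rho_\ph)\le 2N^{3/2}\bigl(\sum_j\|D_{A,x_j}\ph\|^2\bigr)^{1/2}$, after which completing the square immediately yields $-N^3/(1-\eps)$. Your Hardy--Littlewood--Sobolev route can be made to deliver the same $N^{3/2}T^{1/2}$ bound, but it is longer and requires a Hoffmann-Ostenhof-type inequality to pass from $\|\rho_\ph\|_3$ to the many-body kinetic energy; the Hardy route avoids all of that.

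There is, however, a concrete error in your HLS step as written: the interpolation $\|\rho\|_{6/5}\lesssim\|\rho\|_1^{1/2}\|\rho\|_3^{1/2}$ is false (the exponents do not satisfy $\tfrac{5}{6}=\tfrac{\theta}{1}+\tfrac{1-\theta}{3}$; the correct split is $\|\rho\|_{6/5}\le\|\rho\|_1^{3/4}\|\rho\|_3^{1/4}$). With your exponents you would obtain $D(\rho_\ph)\lesssim \|\rho\|_1\|\rho\|_3\lesssim N\cdot T$, i.e. a term \emph{linear} in the kinetic energy $T$ with a coefficient growing like $N$, which cannot be absorbed by choosing $\eps$ small uniformly in $N$ --- the argument would not close. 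With the correct exponents one gets $D(\rho_\ph)\lesssim N^{3/2}T^{1/2}$ and Young's inequality produces exactly the $\eps T + C_\eps N^3$ splitting you announced, so your claimed $N^3$ is right but the intermediate inequality must be repaired.
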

\noindent\emph{Remark.} In the physical regime $\nu >2$ without external fields, in \cite{FLST2011a} it was proven that $C_N(0,0,\nu,1)\ge -c(\nu)N$ for $c(\nu)>0$.
\begin{proof}[Proof of Lemma~\ref{lm:ndependence}.]
Let $\nu>0$, then by \eqref{generalenergyin}
\begin{align}
C_N(A,V,\nu,1)\le N C_1(A,V,1),\label{energysubadd}
\end{align}
which proves the upper bound in \eqref{ptminimalesti}. Let $\ph\in C_0^{\infty}(\R^{3N})$ be normalized, then by the Hardy and the diamagnetic inequality
\begin{align}
D(\rho_{\ph})\le 2N^{3/2}\left(\sum_{j=1}^N\|D_{A,x_j}\ph\|^2\right)^{1/2}. \label{Dhardy}
\end{align}
Thus from \eqref{Dhardy}, \eqref{Vformbound} and by completion of squares with respect to $\left(\sum_{j=1}^N\|D_{A,x_j}\ph\|^2\right)^{1/2}$, we conclude
$$
\EE^{(N)}_{\nu,1}(A,V,\ph) \ge -\frac{N^3}{(1-\eps)}-C_{\eps} N,
$$
where $C_{\eps}>0$. This proves the lower bound of \eqref{ptminimalesti}. 

By a similar calculation
\begin{align}
\EE^{(N)}_{\nu,1}(A_{\alpha^{-1}},V_{\alpha^{-1}},\ph) \ge -\frac{N^3}{(1-\eps)}-C_{\eps}\alpha^{-2} N. \label{simicalc}
\end{align}
For $\alpha$ large, there exists a constant $c>0$ such that \eqref{simicalc} is bounded from below by $-cN^3$. Lemma~\ref{limnotscaled} and \eqref{energysubadd} finish the proof.
\end{proof}

\begin{lemma}\label{lm:weakfieldlimit1}
Suppose $A\in L^3_{\rm loc}(\R^3)$ and $V\in L^{3/2}_{\rm loc}(\R^3)$. Then 
\begin{align*}
A_{\alpha^{-1}} &\to 0\quad (\alpha\to \infty)\quad \text{in } L^2_{\rm loc}(\R^3),\\
V_{\alpha^{-1}} &\to 0\quad (\alpha\to \infty)\quad \text{in } L^1_{\rm loc}(\R^3).
\end{align*}
\end{lemma}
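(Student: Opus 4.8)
The plan is to reduce each statement to a change of variables that exposes the critical scaling, followed by H\"older's inequality and the absolute continuity of the Lebesgue integral. Convergence in $L^2_{\rm loc}$ (resp. $L^1_{\rm loc}$) means convergence in $L^2(K)$ (resp. $L^1(K)$) for every compact $K$; since every compact set sits inside some ball $B_\rho(0)$ and $\|\cdot\|_{L^p(K)}\le\|\cdot\|_{L^p(B_\rho)}$ whenever $K\subset B_\rho$, it suffices to prove convergence on balls $B_\rho:=B_\rho(0)$ for an arbitrary fixed $\rho>0$.

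First I would treat $A$. Recalling $A_{\alpha^{-1}}(x)=\alpha^{-1}A(\alpha^{-1}x)$, I substitute $y=\alpha^{-1}x$, $dx=\alpha^3\,dy$, which maps $B_\rho$ onto the shrinking ball $B_{\rho/\alpha}$ and gives
\[
\int_{B_\rho}|A_{\alpha^{-1}}(x)|^2\,dx=\alpha\int_{B_{\rho/\alpha}}|A(y)|^2\,dy.
\]
The prefactor $\alpha$ is the crux and must be absorbed. To this end I apply H\"older with exponents $3/2$ and $3$ on $B_{\rho/\alpha}$, using $A\in L^3_{\rm loc}$:
\[
\int_{B_{\rho/\alpha}}|A|^2\,dy\le\Bigl(\int_{B_{\rho/\alpha}}|A|^3\,dy\Bigr)^{2/3}\,|B_{\rho/\alpha}|^{1/3},
\]
and since $|B_{\rho/\alpha}|^{1/3}=c\,\rho\,\alpha^{-1}$ in three dimensions, the $\alpha$ cancels exactly, leaving $\int_{B_\rho}|A_{\alpha^{-1}}|^2\,dx\le c\,\rho\,(\int_{B_{\rho/\alpha}}|A|^3\,dy)^{2/3}$. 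For $\alpha\ge 1$ we have $B_{\rho/\alpha}\subset B_\rho$, on which $|A|^3$ is integrable; as $\alpha\to\infty$ the ball shrinks to the origin, so by dominated convergence (the integrand $|A|^3\chi_{B_{\rho/\alpha}}$ vanishes pointwise a.e. and is dominated by $|A|^3\chi_{B_\rho}\in L^1$) the right-hand side tends to $0$, proving $A_{\alpha^{-1}}\to 0$ in $L^2(B_\rho)$.

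The argument for $V$ is identical in structure. Using $V_{\alpha^{-1}}(x)=\alpha^{-2}V(\alpha^{-1}x)$ and the same substitution yields $\int_{B_\rho}|V_{\alpha^{-1}}|\,dx=\alpha\int_{B_{\rho/\alpha}}|V|\,dy$; H\"older with exponents $3/2$ and $3$ together with $V\in L^{3/2}_{\rm loc}$ and $|B_{\rho/\alpha}|^{1/3}=c\,\rho\,\alpha^{-1}$ again cancels the prefactor, giving the bound $c\,\rho\,(\int_{B_{\rho/\alpha}}|V|^{3/2}\,dy)^{2/3}$, and dominated convergence on the shrinking ball finishes the proof.

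There is no genuine analytic obstacle here: the whole point is that $L^3(\R^3)$ and $L^{3/2}(\R^3)$ are exactly the scale-invariant spaces for the rescalings $A\mapsto\lambda A(\lambda\,\cdot)$ and $V\mapsto\lambda^2 V(\lambda\,\cdot)$, so that the Jacobian of the dilation and the volume of the shrinking ball conspire to cancel the otherwise divergent prefactor $\alpha$. The only point requiring care is to verify this exact cancellation of the scaling exponents; everything else is a routine use of H\"older's inequality and absolute continuity of the integral.
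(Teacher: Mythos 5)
Your proof is correct. Note that the paper does not prove this lemma itself but defers to the reference \cite{GW2013}; your argument --- change of variables $y=\alpha^{-1}x$, H\"older with the scale-critical exponents $(3/2,3)$ so that the Jacobian factor $\alpha$ is exactly cancelled by $|B_{\rho/\alpha}|^{1/3}\sim\rho/\alpha$, and then absolute continuity of the integral on the shrinking ball --- is the standard proof and supplies a complete, self-contained justification of the statement.
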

For the proof of this Lemma we refer to \cite{GW2013}. 

\begin{lemma}\label{limnotscaled}
For any values of $N$ and $\nu >0$. If the assumptions (AV2) are satisfied, then
$$
\lim_{\alpha\to\infty} C_N(A_{\alpha^{-1}}, V_{\alpha^{-1}},\nu,1) = C_N(0,0,\nu,1).
$$
\end{lemma}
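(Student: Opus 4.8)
The plan is to establish the two matching bounds
$\limsup_{\alpha\to\infty} C_N(A_{\alpha^{-1}},V_{\alpha^{-1}},\nu,1)\le C_N(0,0,\nu,1)$ and
$\liminf_{\alpha\to\infty} C_N(A_{\alpha^{-1}},V_{\alpha^{-1}},\nu,1)\ge C_N(0,0,\nu,1)$. The main input is Lemma~\ref{lm:weakfieldlimit1}, which under (AV2) gives $A_{\alpha^{-1}}\to 0$ in $L^2_{\rm loc}$ and $V_{\alpha^{-1}}\to 0$ in $L^1_{\rm loc}$; note that in the functional \eqref{PT-functional} the remaining terms $V_C$ and $D(\rho_{\ph})$ are independent of $\alpha$.

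For the upper bound I would fix $\eps>0$ and a normalized $\ph\in C_0^{\infty}(\R^{3N})$ with $\EE^{(N)}_{\nu,1}(0,0,\ph)\le C_N(0,0,\nu,1)+\eps$. Inserting this fixed $\ph$, the difference $\EE^{(N)}_{\nu,1}(A_{\alpha^{-1}},V_{\alpha^{-1}},\ph)-\EE^{(N)}_{\nu,1}(0,0,\ph)$ consists only of the terms $\sum_j(\norm{D_{A_{\alpha^{-1}},x_j}\ph}^2-\norm{\nabla_{x_j}\ph}^2)$ and $\sum_j\sprod{\ph}{V_{\alpha^{-1}}(x_j)\ph}$. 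Since $\ph$ is bounded and compactly supported, the first is controlled by $\norm{A_{\alpha^{-1}}}_{L^2(K)}$ on a fixed compact $K\supset\supp\ph$ (a cross term $\lesssim\norm{A_{\alpha^{-1}}}_{L^2(K)}\norm{\nabla\ph}$ via Cauchy--Schwarz, plus a quadratic term $\lesssim\norm{A_{\alpha^{-1}}}_{L^2(K)}^2$), and the second by $\norm{\ph}_{\infty}^2\norm{V_{\alpha^{-1}}}_{L^1(K)}$. By Lemma~\ref{lm:weakfieldlimit1} both tend to $0$, so $\EE^{(N)}_{\nu,1}(A_{\alpha^{-1}},V_{\alpha^{-1}},\ph)\to\EE^{(N)}_{\nu,1}(0,0,\ph)$, whence $\limsup_{\alpha}C_N\le C_N(0,0,\nu,1)+\eps$, and $\eps\to0$ closes this half.

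For the lower bound I would first note that $V_{\alpha^{-1}}$, $V_C$ and $D(\rho_{\ph})$ depend on $\ph$ only through $|\ph|$, so the diamagnetic inequality gives $C_N(A_{\alpha^{-1}},V_{\alpha^{-1}},\nu,1)\ge C_N(0,V_{\alpha^{-1}},\nu,1)$, reducing matters to the field-free kinetic term. Exactly as in the derivation of \eqref{simicalc}, the zero-bound assumption \eqref{Vformbound} rescales to $\sum_j|\sprod{\ph}{V_{\alpha^{-1}}(x_j)\ph}|\le \eps\sum_j\norm{\nabla_{x_j}\ph}^2 + C_{\eps}\alpha^{-2}N\norm{\ph}^2$, so that $\EE^{(N)}_{\nu,1}(0,V_{\alpha^{-1}},\ph)\ge C_N(0,0,\nu,1) - \eps\sum_j\norm{\nabla_{x_j}\ph}^2 - C_{\eps}\alpha^{-2}N$. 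To make the error $\eps\sum_j\norm{\nabla_{x_j}\ph}^2$ harmless I need a uniform kinetic bound for an approximate minimizer $\ph$ of $C_N(0,V_{\alpha^{-1}},\nu,1)$: writing $t=(\sum_j\norm{\nabla_{x_j}\ph}^2)^{1/2}$ and using $V_C\ge0$ and the Hardy-type bound \eqref{Dhardy} with $A=0$, i.e. $D(\rho_{\ph})\le 2N^{3/2}t$, one obtains (fixing $\eps=\tfrac12$) $\EE^{(N)}_{\nu,1}(0,V_{\alpha^{-1}},\ph)\ge \tfrac12 t^2 - 2N^{3/2}t - C_{1/2}\alpha^{-2}N$; since the left side is $\le C_N(0,V_{\alpha^{-1}},\nu,1)+o(1)$, which is bounded above uniformly in $\alpha$ by the upper bound already proved, this quadratic inequality forces $t\le T$ with $T$ independent of $\alpha$. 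Feeding $t\le T$ back, $C_N(0,V_{\alpha^{-1}},\nu,1)\ge C_N(0,0,\nu,1)-\eps T^2 - C_{\eps}\alpha^{-2}N-o(1)$; sending $\alpha\to\infty$ and then $\eps\to0$ gives the claim.

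The only delicate point is this lower bound, and within it the uniform-in-$\alpha$ control of the kinetic energy of approximate minimizers, which is what renders the relative form-bound error negligible. The magnetic field costs nothing by diamagnetism, and the decisive feature is that the additive form-bound constant carries a factor $\alpha^{-2}$, coming precisely from the rescaling $V_{\alpha^{-1}}(x)=\alpha^{-2}V(\alpha^{-1}x)$, so it vanishes in the limit.
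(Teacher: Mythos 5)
Your proposal is correct, and its skeleton coincides with the paper's: the upper bound is obtained exactly as in the paper by evaluating the functional on a fixed near-minimizer of $\EE^{(N)}_{\nu,1}(0,0,\cdot)$ and invoking Lemma~\ref{lm:weakfieldlimit1} (you merely spell out the $L^2_{\rm loc}$/$L^1_{\rm loc}$ estimates that the paper leaves implicit), and the lower bound likewise starts from the diamagnetic inequality together with the rescaled form bound $|\sprod{\ph}{V_{\alpha^{-1}}\ph}|\le\eps\norm{\nabla\ph}^2+C_\eps\alpha^{-2}\norm{\ph}^2$, whose $\alpha^{-2}$ in the additive constant is indeed the decisive point. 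Where you diverge is in how the residual term $\eps\sum_j\norm{\nabla_{x_j}\ph}^2$ is disposed of. The paper absorbs it into the kinetic energy, arriving at $\EE^{(N)}_{\nu,1}(A_{\alpha^{-1}},V_{\alpha^{-1}},\ph)\ge(1-\eps)\,C_N(0,0,\nu,(1-\eps)^{-1})-C_\eps N\alpha^{-2}$, and then lets $\eps\to0$; this is shorter but tacitly uses continuity of $C_N(0,0,\nu,\cdot)$ at the coupling $1$ (available via the scaling identity \eqref{scalingproperty} and concavity in $U$). You instead keep the target $C_N(0,0,\nu,1)$ fixed and prove a uniform-in-$\alpha$ a priori bound $t\le T$ on the kinetic energy of approximate minimizers, via $V_C\ge0$, the Hardy bound \eqref{Dhardy} with $A=0$, and the quadratic inequality $\tfrac12 t^2-2N^{3/2}t-C\le C_N(0,V_{\alpha^{-1}},\nu,1)+o(1)$, the right side being bounded above by the already-established upper bound. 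That costs you an extra a priori estimate but buys you independence from any continuity statement about $C_N$ in its coupling parameters; both routes are sound.
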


\begin{proof}[Proof of Lemma~\ref{limnotscaled}]
For any normalized $\ph\in C_0^{\infty}(\R^{3N})$
\begin{align*}
\limsup_{\alpha\to\infty} C_N(A_{\alpha^{-1}}, V_{\alpha^{-1}},\nu,1)\le \limsup_{\alpha\to\infty} \EE_{\nu,1}^{(N)}(A_{\alpha^{-1}}, V_{\alpha^{-1}},\ph) = \EE_{\nu,1}^{(N)} (0,0,\ph),
\end{align*}
where the last equality is a consequence of Lemma~\ref{lm:weakfieldlimit1}. This implies 
$$
\limsup_{\alpha\to\infty} C_N(A_{\alpha^{-1}}, V_{\alpha^{-1}},\nu,1)\le C_N(0,0,\nu,1).
$$ 

It remains to prove the other direction. For all normalized $\ph\in C_0^{\infty}(\R^{3N})$ and by \eqref{Vformbound} for all $1>\eps >0$ there exists $C_{\eps}>0$ such that
\begin{align}
\EE_{\nu,1}^{(N)}(A_{\alpha^{-1}}, V_{\alpha^{-1}},\ph) &\ge (1-\eps)C_N(0,0,\nu,(1-\eps )^{-1})-C_{\eps} N \alpha^{-2}\label{lowerboundencn}
\end{align}
where the diamagnetic inequality has been used. The Lemma follows immediately from \eqref{lowerboundencn}.
\end{proof}

\bibliographystyle{plain}

\end{document}